\documentclass[letterpaper,12pt]{amsart}
\textwidth=16.00cm 
\textheight=22.00cm 
\topmargin=0.00cm
\oddsidemargin=0.00cm 
\evensidemargin=0.00cm 
\headheight=0cm 
\headsep=0.5cm

\textheight=630pt

\usepackage[section]{algorithm}
\usepackage{latexsym,array,delarray,amsthm,amssymb,epsfig}


\theoremstyle{plain}
\newtheorem{thm}{Theorem}[section]
\newtheorem{lemma}[thm]{Lemma}
\newtheorem{prop}[thm]{Proposition}
\newtheorem{cor}[thm]{Corollary}

\newtheorem*{thm*}{Theorem}
\newtheorem*{lemma*}{Lemma}
\newtheorem*{prop*}{Proposition}
\newtheorem*{cor*}{Corollary}
\newtheorem*{conj*}{Conjecture}

\theoremstyle{definition}
\newtheorem{defn}[thm]{Definition}
\newtheorem{ex}[thm]{Example}
\newtheorem{pr}[thm]{Problem}

\theoremstyle{remark}


\newcommand{\rr}{\mathbb{R}}




\newcommand{\Span}{\mathrm{span}}
\newcommand{\ind}{\mbox{$\perp \kern-5.5pt \perp$}}

\makeatletter
\renewcommand*\env@matrix[1][*\c@MaxMatrixCols c]{%
  \hskip -\arraycolsep
  \let\@ifnextchar\new@ifnextchar
  \array{#1}}
\makeatother

\begin{document}

\title{Distance-based phylogenetic methods around a polytomy}
\author{Ruth Davidson and Seth Sullivant}
\maketitle

\begin{abstract}
Distance-based phylogenetic algorithms attempt to solve the NP-hard 
least squares phylogeny problem by mapping an arbitrary 
dissimilarity map representing biological data to a tree metric.  
The set of all dissimilarity maps is a Euclidean space properly 
containing the space of all tree metrics as a polyhedral fan.   
Outputs of distance-based tree reconstruction algorithms such as 
UPGMA and Neighbor-Joining are points in the maximal cones in the fan.   
Tree metrics with polytomies lie at the intersections of maximal cones. 

A phylogenetic algorithm divides the space of all dissimilarity maps
into regions based upon which combinatorial tree is 
reconstructed by the algorithm.  Comparison of phylogenetic methods
can be done by comparing the geometry of these regions.
We use polyhedral geometry to compare the local nature of the subdivisions
induced by least squares phylogeny, UPGMA, and Neighbor-Joining.
Our results suggest that in some circumstances, UPGMA and Neighbor-Joining
poorly match least squares phylogeny when the true tree has a polytomy.
\end{abstract}


\section{Introduction}
A function $\alpha : X \times X  \to \rr$ is called a 
\textit{dissimilarity map} if for all $x, y \in X$, $\alpha(x,x) = 0$ and 
$\alpha(x,y) = \alpha(y,x)$.  A dissimilarity map $\alpha$ is a \emph{tree
metric} if it arises as the set of pairwise distances between
the leaves in a tree with edge lengths.
A distance-based phylogenetic method is a procedure that
takes as input a dissimilarity map $\alpha$ and returns
a tree metric $\hat \alpha$.

Among the most intuitively appealing distance-based phylogenetic
methods is the  \emph{least-squares phylogeny} (LSP).  The least-squares phylogeny problem asks, for a given dissimilarity $\alpha$,
what is the tree metric $\hat\alpha$ that minimizes the Euclidean distance: 
$$
d_{2}(\alpha, \hat\alpha) :=  \sqrt{\sum_{x,y \in X} 
(\alpha(x,y) - \hat\alpha(x,y))^{2} }. 
$$
The least squares phylogeny problem is NP-hard \cite{Day1987},
and because of this many distance-based phylogenetic algorithms have
been developed which attempt to build up the tree piece by
piece while locally optimizing the Euclidean distance at
each step.  Two popular agglomerative distance based-methods designed according to
this philosophy are
UPGMA (Unweighted Pair-Group Method with Arithmetic Mean) and NJ (Neighbor-Joining), which both run in polynomial time.  Since LSP is NP-hard,
UPGMA and NJ cannot solve LSP exactly.  So it is natural to ask:
how well do these distance-based methods perform when attempting
to solve the LSP problem?  Under what circumstances do
distance-based heuristics return the same combinatorial tree as
the least squares phylogeny?

A well-known consistency result of Atteson \cite{Atteson1997}
says the following:  Let $\alpha$ be a tree metric, arising from a 
binary tree 
all of whose branch lengths are bounded away from zero, and
let $\alpha'$ be a dissimilarity map which is sufficiently close to some tree metric
$\alpha$.  Then NJ applied to 
$\alpha'$ returns a tree with the same topology as $\alpha$.
A similar statement also holds for the LSP problem,
since other tree metrics with a different topology are necessarily 
bounded away from a given tree metric with a fixed topology and
large edge lengths.  Hence, Neighbor-Joining gives a tree topology
consistent with the Least-Squares Phylogeny when all edge lengths
are bounded away from zero.  This leads us to the main question of 
study in the present paper:

\begin{pr}
How do distance based-heuristics (UPGMA, NJ) compare to the LSP when the true tree metric has a polytomy?
\end{pr}

A \emph{polytomy} is a vertex in a tree with more than
three neighbors.  In a rooted tree, this represents a speciation
event where many different species were produced.  Polytomies arise in tree construction from collections of species
for which there is not enough data to decide which sequence of binary events
is most relevant.

Our comparison of different phylogenetic reconstruction methods is 
based on methods from geometry.  In particular, any distance-based
phylogenetic reconstruction method partitions the set of all
dissimilarity maps into regions indexed by the possible combinatorial
types of tree reconstructed by the method.  We can then
compare these regions for different methods.  In the case
of the distance-based heuristics (UPGMA, NJ), the resulting regions
are polyhedral cones.  For the LSP, the regions
are potentially more complicated semialgebraic sets (solutions to polynomial
inequalities).  The idea of comparing the two
distanced-based methods using (polyhedral) geometry already appears
in \cite{NJSphere} and \cite{Haws2011}, comparing Neighbor-Joining
to Balanced Minimum Evolution (BME).  

In previous work, we characterized the polyhedral subdivision induced
by the UPGMA algorithm \cite{Davidson2013}.  
While we do not yet know a complete description
of the regions induced by LSP, a local analysis of the performance of LSP and distance-based heuristics near a polytomy
can be done using polyhedral geometry.  The
resulting analysis depends heavily on the geometry of
phylogenetic tree space near tree metrics that contain a polytomy.
It is this analysis which comprises the bulk of this paper. 

This paper is organized as follows:  in the next section, we
review basic properties of tree space, including a description
of the different cones in the standard decomposition.
We provide the description of both tree space and equidistant tree space.
Section \ref{sec:tritomygeometry} contains a detailed analysis
of the local geometry of tree space near tree metrics which
have a tritomy.  In particular, for both equidistant and ordinary
tree metrics, the local geometry depends only on the sizes
of the daughter clades around the tritomy, and not the particular
tree structure of those daughter clades.  In Section
\ref{sec:tritomymethods}, we apply the results from Section 
\ref{sec:tritomygeometry} to understand the local geometry of the
decompositions induced by LSP and UPGMA near tree metrics
that contain a tritomy.  We also explain why these results imply
that UPGMA poorly matches LSP in some circumstances, and we discuss
computational evidence towards the study of NJ from this perspective.
Section \ref{sec:conclusion} contains concluding remarks primarily
about the possibility of extending results for NJ.


\section{Tree space}

Our analysis of phylogenetic algorithms near a polytomy
depends heavily on the geometry of tree space.
Our goal in this section is to recall various notions about
tree space and some of the basic known properties about it
that we will use in subsequent section.  In particular,
by tree space, we will mean the set of all tree metrics
for a given fixed number of leaves $n$, and we will
distinguish between all tree metrics and equidistant tree
metrics.  We assume familiarity with combinatorial phylogenetics
\cite{Felsenstein, Semple2003} and polyhedral geometry \cite{Ziegler1995}.

Let $T$ be a tree with leaves labeled by a set $X$ with $n$ elements,
and let $w : E(T) \rightarrow \rr_{\geq 0}$ be
a function that assigns weights to the edges of $T$.
The \emph{tree metric} $d_{T,w}$ induced by $T$ and $w$ is the
dissimilarity map that assigns a distance $d_{T,w}(x,y)$ as the sum
of the weights along the unique path connecting $x$ and $y$ in $T$.
A tree metric is called an \emph{equidistant} tree metric if
there is a point on the tree, the root $\rho$, such that the
distance between $\rho$ and any leaf is the same.

The set of all dissimilarity maps is naturally identified
with $\rr^{n(n-1)/ 2}_{\geq 0}$, as ${n \choose 2} = n(n-1)/2$, and coordinates in this
space are indexed by unordered pairs of elements in $X$.
The set of tree metrics on $n$-leaf trees is a proper subset of $\rr^{n(n-1)/ 2}_{\geq 0}$, 
denoted $\mathcal{T}_{n}$ and called the \emph{space of trees}
or the space of tree metrics.  Similarly, the set of all
equidistant tree metrics is a subset of $\rr^{n(n-1)/ 2}_{\geq 0}$, is denoted $\mathcal{ET}_{n}$, and is called the
\emph{space of equidistant trees} or  the space of equidistant
tree metrics.  Note that these tree spaces differ from the
space studied in \cite{Billera2001}.

Both $\mathcal{T}_{n}$ and $\mathcal{ET}_{n}$ are
polyhedral fans.  That is, they are the unions of 
polyhedral cones, and when two cones meet, they meet
on common subfaces of both.  The space of trees $\mathcal{T}_{n}$
has one maximal cone for each unrooted trivalent tree.
The space of equidistant trees $\mathcal{ET}_{n}$
has one maximal cone for each rooted binary tree.
The extreme rays of these maximal cones are known in both cases.

\begin{defn}
For each $i \neq j \in X$, let $e_{ij}  \in \rr^{n(n-1)/ 2}$ be
the dissimilarity map such that $e_{ij}(i,j) = 1$ and $e_{ij}(x,y) = 0$
for all other pairs $x,y$.
Let $A_{1},A_{2}, \ldots, A_{k}$ be a collection of disjoint
subsets of $X$.  Define the dissimilarity map
$\delta_{A_{1}|A_{2}|\cdots | A_{k}}$
$$
\delta_{A_{1}|A_{2}|\cdots | A_{k}}  =  \sum_{ij}  e_{ij}
$$
where the sum ranges over all unordered pairs $(i,j)$ such that $i$ and $j$
belong to different blocks.
\end{defn}

In the special case where $A|B$ is a partition of $X$, $A|B$ is usually called a 
\emph{split}. The resulting
dissimilarity map $\delta_{A|B}$ is called a \emph{cut-semimetric}
or \emph{split-psuedometric}.  Each edge in a tree $T$ induces
a split of the leaves of $T$ obtained from the partition of the leaves
that arises from removing the indicated edge.  The set of all
splits implied by a tree $T$ is denoted $\Sigma(T)$.

\begin{prop}\label{prop:extreme}
Let $T$ be a phylogenetic $X$-tree.  The set of all tree metrics
compatible with $T$ is a simplicial cone, whose extreme rays are
the set of vectors $\{\delta_{A|B} : A|B \in \Sigma(T) \}.$
\end{prop}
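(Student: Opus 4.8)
The plan is to establish two things: first, that every tree metric compatible with $T$ lies in the nonnegative span of $\{\delta_{A|B} : A|B \in \Sigma(T)\}$, and second, that these vectors are linearly independent, so that the cone they generate is simplicial. For the first point, I would start from the defining formula for a tree metric: if $w : E(T) \to \rr_{\geq 0}$ assigns weight $w(e)$ to edge $e$, then for leaves $x, y$ the distance $d_{T,w}(x,y)$ is the sum of $w(e)$ over all edges $e$ on the unique $x$-$y$ path. The key combinatorial observation is that an edge $e$ lies on the $x$-$y$ path if and only if $x$ and $y$ are on opposite sides of the split $A_e|B_e$ that $e$ induces on the leaves. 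Hence $d_{T,w}(x,y) = \sum_{e \in E(T)} w(e)\,[\,x,y \text{ separated by } A_e|B_e\,] = \sum_{e \in E(T)} w(e)\,\delta_{A_e|B_e}(x,y)$, which says exactly that $d_{T,w} = \sum_{e} w(e)\,\delta_{A_e|B_e}$. Since each $w(e) \geq 0$, this exhibits $d_{T,w}$ as a nonnegative combination of the proposed extreme rays; conversely any such nonnegative combination is realized by choosing $w$ accordingly, so the set of tree metrics compatible with $T$ is precisely the cone $\{\sum_{A|B \in \Sigma(T)} \lambda_{A|B}\,\delta_{A|B} : \lambda_{A|B} \geq 0\}$.

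For the second point, I need the vectors $\{\delta_{A|B} : A|B \in \Sigma(T)\}$ to be linearly independent. The cleanest way is to use the classical four-point / tree-reconstruction fact that a tree metric with all positive edge weights determines $T$ and the weights uniquely: if $\sum_e \lambda_e\,\delta_{A_e|B_e} = 0$ with the $\lambda_e$ not all zero, split them into positive and negative parts to get two distinct nonnegative weightings of $E(T)$ producing the same tree metric, contradicting uniqueness of edge weights for a fixed tree topology. Alternatively, one can give a direct argument: order the edges and, for each edge $e$, exhibit a pair of leaves $x,y$ such that $e$ is the unique pendant-most edge separating them among the edges with nonzero coefficient — for a pendant edge at leaf $x$ one may take the pair $\{x, y\}$ for a neighboring leaf $y$, and for internal edges one proceeds by peeling leaves — making the matrix whose rows are the $\delta_{A_e|B_e}$ have full row rank. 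Either route shows the $|E(T)|$ generators are independent, so the cone is simplicial with exactly these extreme rays.

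The main obstacle is the second half: proving linear independence (equivalently, that no generator is redundant) rigorously without hand-waving. The slick appeal to uniqueness of edge weights is correct but shifts the work to citing the reconstruction theorem from \cite{Semple2003}; the self-contained combinatorial argument requires a careful induction on the number of leaves, repeatedly removing a cherry (a pair of leaves adjacent to a common vertex) and tracking how the splits restrict — one must check that restricting to a cherry-deleted tree sends the generators for $T$ to the generators for the smaller tree plus a couple of extra coordinates that isolate the deleted pendant edges. I would present the uniqueness-based argument as the main line and remark that it also follows from a direct rank computation. A minor point worth stating explicitly is that we should assume $T$ has no vertices of degree two (or equivalently that distinct edges give distinct splits), so that $\Sigma(T)$ has exactly $|E(T)|$ elements and the correspondence $e \leftrightarrow A_e|B_e$ is a bijection; without this the count is off.
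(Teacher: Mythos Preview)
Your argument is correct. The paper itself does not give a proof of this proposition at all: immediately after the statement it simply says ``This is a polyhedral geometry rewording of Theorem 7.1.8 of \cite{Semple2003}.'' So there is nothing to compare at the level of proof strategy; you have supplied the content behind that citation.

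For what it is worth, the argument you sketch is essentially the standard one that underlies the cited theorem in Semple--Steel: the identity $d_{T,w} = \sum_{e} w(e)\,\delta_{A_e|B_e}$ is exactly their additive decomposition of a tree metric into split metrics, and linear independence is equivalent to the uniqueness of edge weights for a fixed topology, which is what their Theorem 7.1.8 asserts. Your remark about excluding degree-two vertices is apt and is built into the definition of a phylogenetic $X$-tree, so no extra hypothesis is needed. One small tightening: in the uniqueness-based independence argument, you should note that uniqueness of edge weights for a \emph{fixed} tree $T$ holds for arbitrary nonnegative weights (not just strictly positive ones), since otherwise splitting into positive and negative parts could produce weightings with zeros; this is true and follows from the same cherry-reduction you mention, but it is worth saying explicitly rather than invoking the version of the reconstruction theorem that assumes positive internal edge lengths.
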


This is a polyhedral geometry rewording of Theorem 7.1.8 of \cite{Semple2003}.
Note that the description from Proposition \ref{prop:extreme}
holds regardless of whether or not the tree $T$ is binary.
In particular, we see that the intersection of all cones associated
to a collection of trees corresponds to the cone associated to the
tree obtained from a common coarsening of all trees in the given collection.

The cones of the space of equidistant trees $\mathcal{ET}_{n}$
are not simplicial in general, but they can be subdivided
into cones based on ranked trees, which are simplicial.
We describe these cones now.  A \emph{ranked} tree is a rooted phylogenetic
$X$-tree with a rank ordering on the internal vertices.
When $X =[n]$, these are naturally in bijection with maximal chains in the lattice
of set partitions $\Pi_{n}$.

 \begin{figure}[h]
 \centering
 \includegraphics[width=10cm]{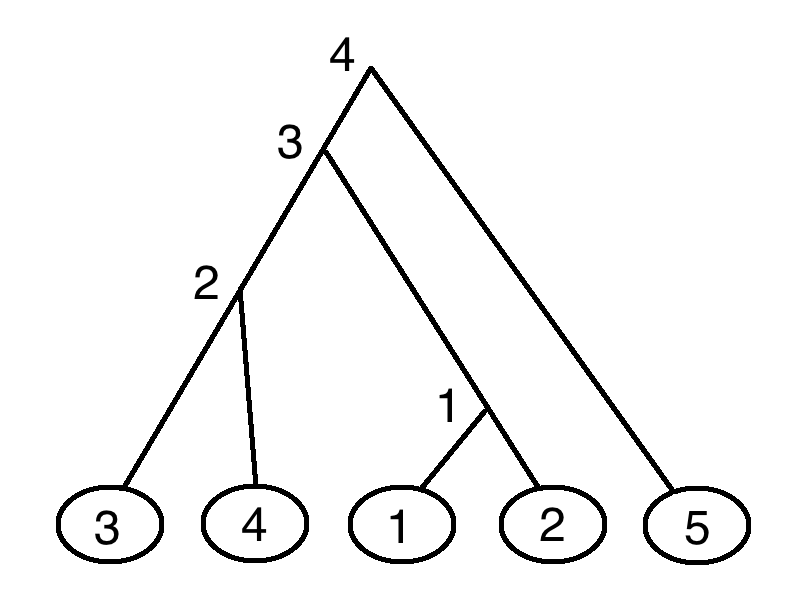}
 \caption{A ranked tree with leaf set $X = \{1,2,3,4, 5\}$.}
 \label{Rank_Example}
 \end{figure}
 
\begin{ex}
\label{RankEx}
The maximal chain
$$
\pi_{5} = 1|2|3|4|5 \lessdot 12|3|4|5 \lessdot 12|34|5 \lessdot 1234|5 \lessdot 12345 = \pi_{1}
$$
 in $\Pi_{5}$ corresponds to the ranked tree in Figure \ref{Rank_Example}.
\end{ex}

\begin{prop}\label{prop:equiextreme}
Let $$
C = \pi_{n} \lessdot \pi_{n-1} \lessdot \cdots \lessdot \pi_{1} 
$$ be a maximal chain in $\Pi_{n}$, corresponding
to a ranked phylogenetic tree.  The cone of equidistant tree metrics
compatible with $C$ is a simplicial cone whose extreme rays 
are the set of vectors
$\{\delta_{\pi_{i}} : i = 2, \ldots, n \}.$
\end{prop}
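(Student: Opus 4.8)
The plan is to reduce Proposition~\ref{prop:equiextreme} to Proposition~\ref{prop:extreme} by exhibiting the cone of equidistant tree metrics compatible with $C$ as a slice of the (larger) cone of all tree metrics compatible with the underlying unrooted tree. First I would fix the ranked tree $T_C$ associated to the chain $C$, let $\rho_i$ denote the internal vertex of rank $i$ (so $\rho_1$ is the root), and recall that an equidistant metric on $T_C$ is the tree metric $d_{T_C,w}$ for an edge-weighting $w$ for which all root-to-leaf path sums are equal. The key observation is that imposing the equidistance constraint on $d_{T_C,w}$ is exactly the same as requiring $w$ to be nonnegative on every edge \emph{and} to satisfy a system of linear equalities saying that, at each internal vertex, the two (or more) pendant subtrees hang down to the same depth. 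Under the change of variables from edge weights to ``heights'' $h_i = $ the common distance from $\rho_i$ down to the leaves below it, the equidistance conditions become simply $0 \le h_n \le h_{n-1} \le \cdots \le h_2$, where $h_i$ is the height of the node of rank $i$ and we set $h_1 = 0$ is the leaf level — i.e.\ the heights are weakly increasing up the tree. (One must be a little careful with orientation: deeper nodes have larger height, and the root has the largest; the chain order $\pi_n \lessdot \cdots \lessdot \pi_1$ is refinement-increasing, matching ``merge at increasing height''.)

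Next I would write down the coordinate formula. For a pair $x,y \in X$, the meet of the leaves $x$ and $y$ in $T_C$ is the internal node of some rank $i$, which is precisely the largest $i$ such that $x$ and $y$ lie in distinct blocks of $\pi_i$; call it $i(x,y)$. Then $d_{T_C}(x,y) = 2 h_{i(x,y)}$. Unwinding the definition of $\delta_{\pi_i}$: since $\delta_{\pi_i}(x,y) = 1$ exactly when $x,y$ are in different blocks of $\pi_i$, and ``different blocks of $\pi_i$'' is equivalent to $i(x,y) \ge i$, we get $\sum_{i=2}^n c_i\, \delta_{\pi_i}(x,y) = \sum_{i \le i(x,y)} c_i = c_2 + \cdots + c_{i(x,y)}$. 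Setting $c_i = 2(h_i - h_{i-1})$ for $i = 2, \ldots, n$ (with $h_1 = 0$), this telescopes to $2 h_{i(x,y)} = d_{T_C}(x,y)$. Thus every equidistant metric compatible with $C$ is $\sum_{i=2}^n c_i \delta_{\pi_i}$ with all $c_i \ge 0$ (nonnegativity of $c_i$ being exactly the weakly-increasing-heights condition), and conversely any such nonnegative combination is realized by the heights $h_i = \tfrac12(c_2 + \cdots + c_i)$, giving a genuine equidistant tree metric on (a possibly degenerate version of) $T_C$. This shows the cone equals $\operatorname{cone}\{\delta_{\pi_i} : i = 2, \ldots, n\}$.

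It remains to check simpliciality, i.e.\ that the $n-1$ generators $\delta_{\pi_2}, \ldots, \delta_{\pi_n}$ are linearly independent. The cleanest argument: the assignment $(c_2, \ldots, c_n) \mapsto (h_2, \ldots, h_n)$ above is an invertible linear map (partial sums, scaled), and distinct height vectors give distinct metrics because $h_i$ can be recovered from the metric as $\tfrac12 d_{T_C}(x,y)$ for any pair $x,y$ with $i(x,y) = i$ (such a pair exists since the node of rank $i$ has at least two children). Hence the map $(c_i) \mapsto \sum c_i \delta_{\pi_i}$ is injective on $\mathbb{R}^{n-1}$, so the $\delta_{\pi_i}$ are independent and the cone is simplicial of dimension $n-1$. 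Alternatively — and this is the shortcut I would actually take — note that $\pi_i \lessdot \pi_{i-1}$ means $\pi_{i}$ refines $\pi_{i-1}$, so each $\Sigma$ of the associated unrooted-tree edges is nested, the $\delta_{\pi_i}$ are among the extreme rays furnished by Proposition~\ref{prop:extreme} for the tree $T_C$ (the split-pseudometrics of the edges, plus pendant edges grouped), and independence is inherited. I expect the main obstacle to be purely bookkeeping: matching the chain's order convention to the geometric ``height'' convention and handling the pendant/leaf edges correctly so that the telescoping identity $d_{T_C}(x,y) = \sum_{i \le i(x,y)} c_i$ comes out with no off-by-one error; the linear-algebra content is light once that dictionary is set up.
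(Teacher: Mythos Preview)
Your main argument is correct and, in fact, provides more than the paper does: the paper does not prove Proposition~\ref{prop:equiextreme} at all but simply records it as ``a polyhedral geometry rewording of Theorem 7.2.8 of \cite{Semple2003}.'' Your telescoping computation via heights is exactly the standard way to unpack that theorem, and the linear-independence check via recoverability of each $h_i$ from the metric is clean.

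Two small corrections. First, the indexing slip you anticipated does occur: with the paper's convention $\pi_n \lessdot \cdots \lessdot \pi_1$ (so $\pi_n$ is the discrete partition), the condition ``$x,y$ in different blocks of $\pi_i$'' holds for all \emph{large} $i$, not all small $i$. Thus $i(x,y)$ should be the \emph{smallest} $i$ for which $x,y$ lie in different blocks, and the identity reads $\sum_{i=2}^n c_i\,\delta_{\pi_i}(x,y) = \sum_{i \ge i(x,y)} c_i$, with $c_i = 2(h_{i-1}-h_i)$ (root highest, leaves at height $0$). The content is unchanged. Second, your proposed ``shortcut'' via Proposition~\ref{prop:extreme} does not work as stated: the vectors $\delta_{\pi_i}$ for non-binary partitions $\pi_i$ (e.g.\ $\delta_{12|34|5}$) are \emph{not} split pseudometrics, hence are not among the extreme rays supplied by Proposition~\ref{prop:extreme}; they are nontrivial sums of several such rays. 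So independence cannot be ``inherited'' in one line from that proposition --- stick with your direct height-recovery argument, which is already complete.
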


This is a polyhedral geometry rewording of Theorem 7.2.8 of \cite{Semple2003}.

\begin{ex}
If an equidistant tree metric $d \in \rr^{10}_{\geq 0}$  is compatible with the maximal chain in Example \ref{RankEx} then  $d$ satisfies
$$
d_{1,2} \leq d_{3,4} \leq d_{1,3} = d_{1,4} = d_{2,3} = d_{2,4} \leq d_{1,5} = d_{2,5} = d_{3,5} = d_{4,5}
$$
and is in the simplicial cone with extreme rays
$$
(1,1,1,1,1,1,1,1,1,1), (0,1,1,1,1,1,1,1,1,1),
$$
$$
(0,1,1,1,1,1,1,0,1,1), (0,0,0,1,0,0,1,0,1,1)
$$
where the coordinates of $\rr^{10}$ are labeled with the two-element sets $\{ i,j \} \in { [5] \choose 2 }$ in the lexicographic order.
\end{ex}

Note that Proposition \ref{prop:equiextreme} also holds true when
working with  chains that are not maximal, which correspond
to either trees with polytomies or situations where
there are ties in the rankings of the internal vertices.
These chains correspond to intersections of the maximal cones
associated to the maximal chains in the partition lattice.


\begin{figure}[h]
\centering
\includegraphics[width=12cm]{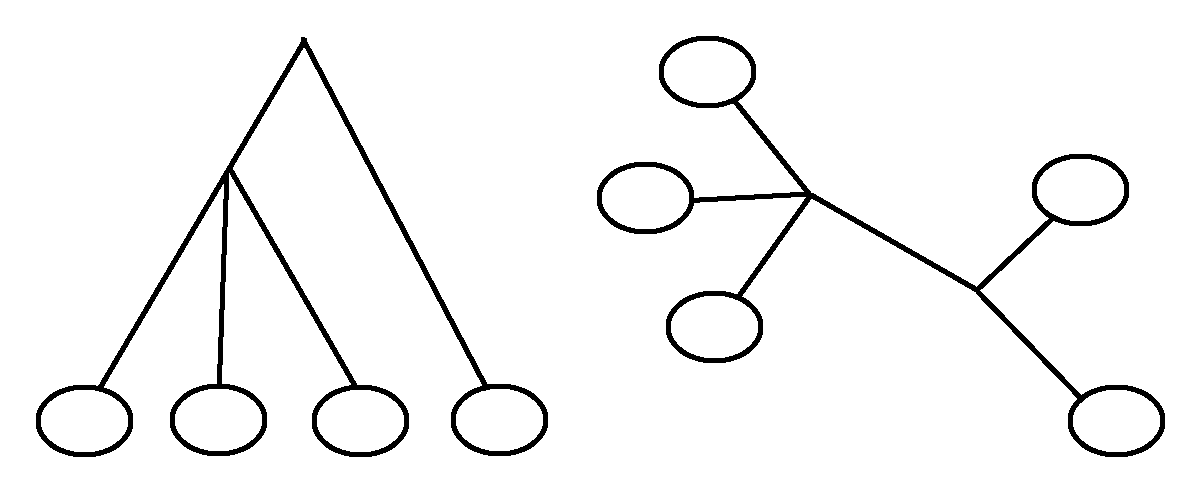}
\caption{Two tritomies, rooted and unrooted.}
\label{RootedUnRooted}
\end{figure}

\section{Geometry of Tree Space Near a Tritomy}\label{sec:tritomygeometry}

The goal of this section to describe the geometry of tree space near
a polytomy, in particular in the special case of tritomies.
For rooted trees, a \emph{tritomy} is an internal vertex that has
three direct descendants.  In an unrooted tree a \emph{tritomy}
is an internal vertex with four neighbors.
When we speak of the ``geometry of tree space near a tritomy'',
we mean to describe the geometry of tree space near a 
generic tree metric that is the tree metric of a tree with
a single tritomy and no other polytomies.  The set of all
such tritomy tree metrics, for a fixed topological
structure on the tree $T$, is a polyhedral cone of 
dimension one less than the dimension of tree space.
Let $C_{T}$ denote this polyhedral cone.
The tree $T$ with a single tritomy can be resolved
to three binary trees.  Denote them $T_{1}, T_{2}, T_{3}$.
The polyhedral cone of a tritomy $C_{T}$ is the intersection 
of the three \emph{resolution cones} $C_{T_{1}}$, $C_{T_{2}}$, and
$C_{T_{3}}$
associated to the
three different ways to resolve the tritomy tree into a
binary tree.

For both equidistant and ordinary tree space, the cones
$C_{T}$ and their resolution cones $C_{T_{i}}$, $i = 1,2,3$ 
satisfy  $\dim C_{T}  = \dim C_{T_{i}}  -1$.  
This is easily seen by the simplicial structure of the
cones $C_{T}$ for any tree $T$, according to Propositions
\ref{prop:extreme} and \ref{prop:equiextreme}.
Hence, locally near a generic point $x$ of $C_{T}$, tree space
looks like $\rr^{k}  \times K_{T}$ where $k = \dim C_{T}$
and $K_{T}$ is a one dimensional polyhedral fan that depends on
$T$ but does not depend on $x$.  
Furthermore, the fan $K_{T}$ can be chosen to live in
a space orthogonal to the span of $C_{T}$, and $\Span  \ K_{T}$ is two-dimensional. 
The goal of this
section is to describe the structure of that fan $K_{T}$.
The analysis depends on the particular structure of the
generators of the various cones involved, and the
cases of equidistant tree metrics and arbitrary tree metrics
must be handled separately.  We treat these cases in Sections
\ref{sec:equires} and \ref{sec:treeres}, respectively.

\begin{figure}
\centering
\includegraphics[width=7cm]{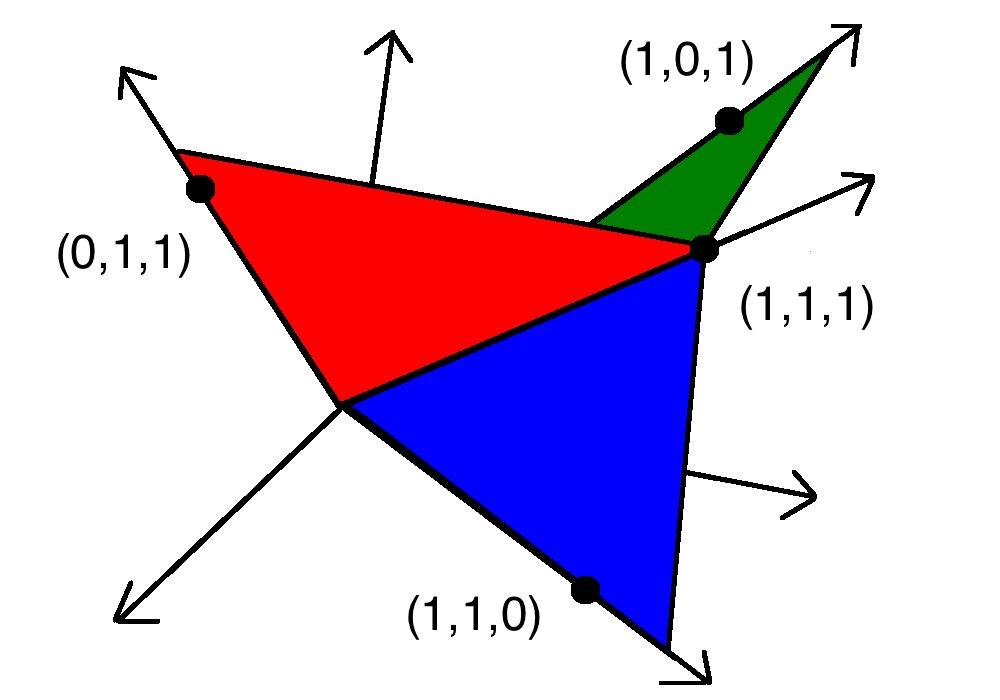}
\caption{The space $\mathcal{ET}_{3}$ with labeled extreme rays}
\label{open_book}
\end{figure}

\begin{figure}
\centering
\includegraphics[width=7cm]{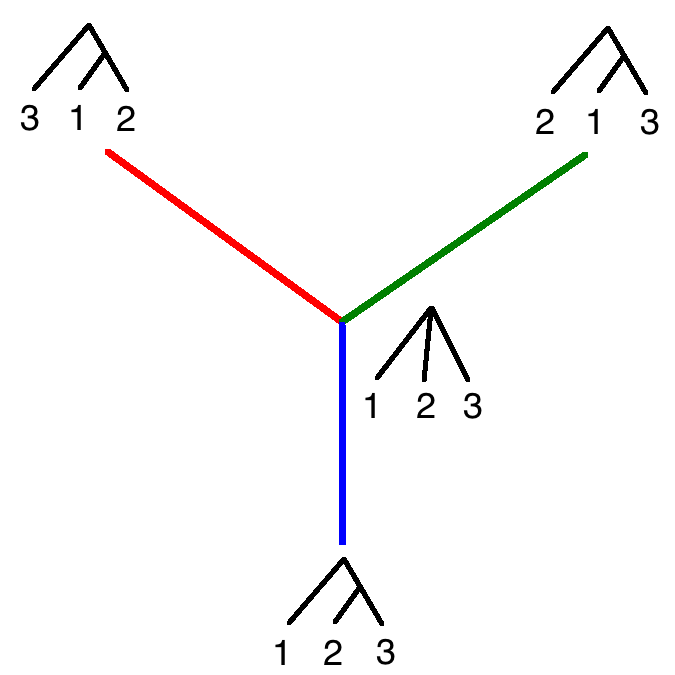}
\caption{The fan $K_{T}$ with labeled cones}
\label{KTFan}
\end{figure}



\subsection{Equidistant Tree Space}\label{sec:equires}

In this section we determine the geometry of the fan
$K_{T}$ for a tritomy tree $T$ in equidistant tree space.
This tritomy tree has a node with three children.  Denote
the daughter clades of these children (that is, the set
of leaves below each of children of the tritomy) by
$A$, $B$, and $C$.  Let $T_{AB}$, $T_{AC}$, and $T_{BC}$
denote the three resolution trees, where for example $T_{AB}$
is the binary resolution where $A \cup B$ forms a clade.
Note that since all the linear spaces that are involved
are the same, instead of working with a fixed tree we
can work with the corresponding rank function and chain in $\Pi_{n}$ to derive our results.  This is what we 
will do in this section.

Let 
$$
K = \pi_{n} \lessdot \pi_{n-1} \lessdot \cdots 
\lessdot \pi_{k+1} \lessdot \pi_{k-1} \lessdot  \cdots \lessdot \pi_{1}
$$
be the chain corresponding to the polytomy tree.  Note that
this is a chain in the partition lattice which leaves out an
element at the $k$-th level.
Here $\pi_{k+1}$ will contain among its blocks $A, B$ and $C$, and
$\pi_{k-1}$ will contain the block $A\cup B \cup C$.
  The resolution trees
$T_{AB}$, $T_{AC}$, and $T_{BC}$ correspond to the three ways
to add a $\pi_{k}$ to this sequence which refines $\pi_{k+1}$ and
is refined by $\pi_{k-1}$.

We are interested in the linear spaces $\Span \, C_{K}$. 

\begin{lemma}\label{span_basis}
For any  (not necessarily maximal) chain $K = \pi_{r} \lessdot \cdots \lessdot \pi_{1}$,
where $\pi_{1} = X$, 
the set of vectors
$$
\{\delta_{\pi_{i}} - \delta_{\pi_{i-1}} \}_{i = 2, \ldots, r }
$$
forms an orthogonal basis for $\Span \  C_{K}$.
\end{lemma}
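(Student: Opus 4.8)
The plan is to verify two things: first that the $r-1$ vectors $\delta_{\pi_i} - \delta_{\pi_{i-1}}$, $i = 2, \dots, r$, span the same linear space as the extreme rays $\{\delta_{\pi_i} : i = 2, \dots, r\}$ of the simplicial cone $C_K$ (which by Proposition~\ref{prop:equiextreme}, extended to non-maximal chains as noted after it, has exactly these $r-1$ extreme rays), and second that these $r-1$ difference vectors are pairwise orthogonal. Orthogonality will automatically give linear independence, and since $\dim \Span\, C_K = r-1$, the count will then force the spanning claim; but it is cleanest to argue the spanning directly and cheaply as well.

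For the spanning step, I would note that $\delta_{\pi_1} = \delta_X = 0$ since $X$ is a single block, so there are no pairs in distinct blocks. Hence $\delta_{\pi_i} - \delta_{\pi_{i-1}} = \sum_{j=2}^{i}(\delta_{\pi_j} - \delta_{\pi_{j-1}})$ telescopes to $\delta_{\pi_i}$ when $i$ is summed up from $2$, so conversely each $\delta_{\pi_i}$ is a partial sum of the differences and each difference is visibly in the span of the $\delta_{\pi_j}$'s; the two sets span the same subspace.

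The orthogonality step is the real content. Fix $i < i'$ and compute the inner product of $v_i := \delta_{\pi_i} - \delta_{\pi_{i-1}}$ with $v_{i'} := \delta_{\pi_{i'}} - \delta_{\pi_{i'-1}}$ in $\rr^{n(n-1)/2}$ with the standard basis $\{e_{pq}\}$. Here $\delta_\sigma = \sum e_{pq}$ over pairs $\{p,q\}$ split by the partition $\sigma$, so $v_i$ has a $+1$ in coordinate $\{p,q\}$ exactly when $\{p,q\}$ is split by $\pi_i$ but not by $\pi_{i-1}$, i.e.\ when $p,q$ lie in a common block of $\pi_{i-1}$ that is subdivided in $\pi_i$; all other coordinates of $v_i$ are $0$. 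The key observation is the refinement/nesting of blocks along the chain: the support of $v_i$ consists of pairs whose $\pi_{i-1}$-block is broken at step $i$, and the support of $v_{i'}$ consists of pairs whose $\pi_{i'-1}$-block (a sub-block, since $\pi_{i'-1} \geq \pi_i$ as $i \le i'-1$) is broken at step $i'$. I would argue that if $\{p,q\}$ is in the support of $v_i$, then $p$ and $q$ already lie in different blocks of $\pi_i$, hence of every $\pi_j$ with $j \ge i$, so in particular of $\pi_{i'-1}$ and $\pi_{i'}$; then $\delta_{\pi_{i'}}(p,q) = \delta_{\pi_{i'-1}}(p,q) = 1$ and $v_{i'}$ vanishes on that coordinate. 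So the supports are disjoint and $\langle v_i, v_{i'}\rangle = 0$. I anticipate the main obstacle, such as it is, is purely bookkeeping: stating precisely ``split by $\pi_i$ but not by $\pi_{i-1}$'' and carefully using that $\pi_i \lessdot \pi_{i-1}$ refines, so that being separated at step $i$ means staying separated at all later steps. Once the support-disjointness is pinned down, the lemma follows immediately, and one gets for free that $\|v_i\|^2$ equals the number of pairs broken at step $i$, which is positive, confirming these are genuine nonzero basis vectors.
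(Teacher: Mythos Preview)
Your proof is correct and follows essentially the same approach as the paper: both argue spanning via the telescoping relation between $\{\delta_{\pi_i}\}$ and $\{\delta_{\pi_i}-\delta_{\pi_{i-1}}\}$ (using $\delta_{\pi_1}=0$), and both argue orthogonality by showing the difference vectors have pairwise disjoint supports, which the paper phrases as ``the positions of the ones in $\delta_{\pi_{i-1}}$ are a subset of the positions of the ones in $\delta_{\pi_i}$.'' One small notational slip: with the chain written $\pi_r \lessdot \cdots \lessdot \pi_1$ and the usual convention that finer partitions are smaller, $i \le i'-1$ gives $\pi_{i'-1} \le \pi_i$, not $\pi_{i'-1} \ge \pi_i$; your parenthetical ``a sub-block'' shows you have the refinement direction right, so this does not affect the argument.
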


\begin{proof}
Since $\delta_{\pi_{2}}, \ldots, \delta_{\pi_{r}}$ are the
extreme rays of the simplicial cone $C_{K}$, they are linearly
independent and hence span $\Span \ C_{K}$.  We can easily solve
for the vectors $\delta_{\pi_{2}}, \ldots, \delta_{\pi_{r}}$ given 
$\delta_{\pi_{i}} - \delta_{\pi_{i-1}}, i = 2, \ldots, r $
hence $\Span \  C_{K} = \Span \  \{\delta_{\pi_{i}} - \delta_{\pi_{i-1}} \}_{i = 2, \ldots, r }$. 
For all $i \in [r]$, the positions of the ones in $\delta_{\pi_{i-1}}$
are a subset of the positions of the ones in $\delta_{\pi_{i}}$.
This guarantees that $\delta_{\pi_{i}} - \delta_{\pi_{i-1}}$
and $\delta_{\pi_{j}} - \delta_{\pi_{j-1}}$ do not have
any nonzero entries in the same positions when $i \neq j$.  Hence these
vectors are orthogonal.  Note that $\delta_{\pi_{1}}$ is the
zero vector if we assume that $\pi_{1} = X$.
\end{proof}

The particular structure of the vectors $\delta_{\pi_{i+1}} - \delta_{\pi_{i}}$
will be useful in what follows.

\begin{lemma}\label{lem:delminusdel}
Let $\pi$ and $\tau$ be two set partitions such that $\pi$ is a refinement
of $\tau$.  Then
$$
\delta_{\pi} - \delta_{\tau} = \sum  e_{i,j}
$$
where $i,j$ are in different parts of $\pi$ and the same part of $\tau$.
\end{lemma}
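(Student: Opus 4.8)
The statement is an equality of two integer vectors in $\rr^{n(n-1)/2}$, so the plan is simply to compare them coordinate by coordinate. Fix an unordered pair $\{i,j\}$ with $i \neq j$ in $X$ and compare the $(i,j)$-entries of the two sides. By the definition of $\delta_{A_1|\cdots|A_k}$, the $(i,j)$-entry of $\delta_{\pi}$ is $1$ if $i$ and $j$ lie in different blocks of $\pi$ and $0$ otherwise, and similarly for $\delta_{\tau}$; on the right-hand side, the $(i,j)$-entry of $\sum e_{k,\ell}$ over pairs $\{k,\ell\}$ separated by $\pi$ but not by $\tau$ is $1$ exactly when $\{i,j\}$ itself is such a pair, and $0$ otherwise. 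So the whole lemma reduces to the following elementary claim about the refinement relation: since $\pi$ refines $\tau$, every block of $\pi$ is contained in a block of $\tau$, hence if $i,j$ are in the same block of $\pi$ they are automatically in the same block of $\tau$. Equivalently, the set of pairs separated by $\tau$ is contained in the set of pairs separated by $\pi$.

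With that containment in hand, I would run through the three mutually exclusive cases for a fixed pair $\{i,j\}$: (1) $i,j$ in the same block of $\pi$ — then also in the same block of $\tau$, so both sides have $(i,j)$-entry $0$; (2) $i,j$ in different blocks of $\pi$ but the same block of $\tau$ — then $\delta_\pi$ contributes $1$, $\delta_\tau$ contributes $0$, so the left side is $1$, and $\{i,j\}$ is exactly a pair of the type summed on the right, contributing $1$; (3) $i,j$ in different blocks of $\tau$ — then also in different blocks of $\pi$ by the containment, so the left side is $1-1=0$, and $\{i,j\}$ is not summed on the right since it is separated by $\tau$, giving $0$. In every case the $(i,j)$-entries agree, so the vectors are equal.

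There is no real obstacle here; the only thing to be careful about is bookkeeping the direction of the refinement relation (``$\pi$ refines $\tau$'' means $\pi$ is finer, so $\tau$'s blocks are unions of $\pi$'s blocks, and the inclusion of ``separated pairs'' goes the way stated above), and observing that the three cases are exhaustive and disjoint. I would state the proof in two or three sentences: note that by definition the $(i,j)$-coordinate of $\delta_\rho$ indicates whether $\rho$ separates $i$ and $j$; observe that since $\pi$ refines $\tau$ any pair separated by $\tau$ is separated by $\pi$; and conclude that the coordinatewise difference $\delta_\pi - \delta_\tau$ is the $0/1$ indicator vector of exactly the pairs separated by $\pi$ but not by $\tau$, which is precisely $\sum e_{i,j}$ over such pairs.
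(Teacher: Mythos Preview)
Your proof is correct and follows the same approach as the paper, which simply says ``Trivial from the definition of $\delta_{\pi}$.'' You have merely spelled out this triviality in full detail via the coordinatewise case analysis.
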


\begin{proof}
Trivial from the definition of $\delta_{\pi}$.
\end{proof}

Now for each of the resolution cones, for example $C_{T_{AB}}$, 
there is a unique ray $p_{AB} \in \Span \,C_{T_{AB}}$ that is orthogonal
to $\Span \, C_{T}$.  We explain how to construct that ray now.


\begin{lemma}\label{outlier_projection}
Let $a = |A|$, $b = |B|$, and $c = |C|$.
The vector $p_{AB}$ is given by 
$$
p_{AB} =  
 -\frac{ac + bc }{ab + ac + bc} \delta_{A|B} +  \frac{ab}{ab+ ac + bc} (\delta_{A|C} + \delta_{B|C}).
 $$
\end{lemma}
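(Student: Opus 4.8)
The plan is to verify directly that the proposed vector $p_{AB}$ satisfies the two defining properties: (1) $p_{AB} \in \Span\, C_{T_{AB}}$, and (2) $p_{AB} \perp \Span\, C_T$; uniqueness then follows since $\Span\, C_{T_{AB}}$ has dimension one more than $\Span\, C_T$ and the orthogonal complement of $\Span\, C_T$ inside $\Span\, C_{T_{AB}}$ is therefore a line. Property (1) is essentially immediate once one understands the ray structure: the resolution tree $T_{AB}$ corresponds to inserting a partition $\pi_k$ between $\pi_{k+1}$ and $\pi_{k-1}$ in which the block $A\cup B$ appears (while $A, B, C$ are still separate blocks of $\pi_{k+1}$), and the chain $K$ for the tritomy omits exactly this $\pi_k$. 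By Lemma \ref{span_basis} applied to the chains $K$ and $K \cup \{\pi_k\}$, the space $\Span\, C_{T_{AB}}$ is spanned by $\Span\, C_T$ together with the single extra orthogonal basis vector $\delta_{\pi_k} - \delta_{\pi_{k-1}}$ (together with $\delta_{\pi_{k+1}} - \delta_{\pi_k}$ replacing $\delta_{\pi_{k+1}} - \delta_{\pi_{k-1}}$). So I will work relative to $\Span\, C_T$ and show $p_{AB}$ has the right component.

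The concrete computation uses Lemma \ref{lem:delminusdel}. The key observation is how the cut-semimetrics $\delta_{A|B}$, $\delta_{A|C}$, $\delta_{B|C}$ decompose relative to the chain $K$: since $A, B, C$ lie in separate blocks of $\pi_{k+1}$ but in the common block $A\cup B\cup C$ of $\pi_{k-1}$, the difference $\delta_{\pi_{k+1}} - \delta_{\pi_{k-1}}$ is exactly the sum of $e_{ij}$ over pairs with $i,j$ in two different ones of the clades $A,B,C$, i.e.\ $\delta_{\pi_{k+1}} - \delta_{\pi_{k-1}} = \delta_{A|B|C} = \delta_{A|B} + \delta_{A|C} + \delta_{B|C}$ when we restrict attention to the three-block partition $A|B|C$ of $A\cup B\cup C$. (Pairs inside a single clade, or with an endpoint outside $A\cup B\cup C$, contribute $0$ here.) First I would write each $\delta_{\cdot}$ appearing in $p_{AB}$ in terms of the orthogonal basis vectors from Lemma \ref{span_basis}: all three of $\delta_{A|B}$, $\delta_{A|C}$, $\delta_{B|C}$ lie entirely in the coordinate block supporting $\delta_{\pi_{k+1}} - \delta_{\pi_{k-1}}$ (they are orthogonal to every other $\delta_{\pi_i} - \delta_{\pi_{i-1}}$), so the whole question collapses into that single two-dimensional (after refinement, two basis directions $\delta_{\pi_{k+1}}-\delta_{\pi_k}$ and $\delta_{\pi_k}-\delta_{\pi_{k-1}}$) summand. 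Then I would compute the inner products $\langle \delta_{A|B}, \delta_{A|C}\rangle$, etc.: two cut-semimetrics $\delta_{A|C}$ and $\delta_{B|C}$ share the $e_{ij}$ with $i\in C$ and ($j\in A$ for the first) — actually they share no common pair; more carefully, $\langle \delta_{A|B}, \delta_{A|C}\rangle$ counts pairs $(i,j)$ with $i,j$ separated by both $A|B$ and $A|C$, which forces... and I would just tabulate: $\langle\delta_{A|B},\delta_{A|B}\rangle = ab$, $\langle\delta_{A|B},\delta_{A|C}\rangle = $ (pairs in $A\times ?$) and so on, getting a small $3\times 3$ Gram-type system. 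The coefficients $-\frac{ac+bc}{ab+ac+bc}$ and $\frac{ab}{ab+ac+bc}$ should then drop out of the requirement that $\langle p_{AB}, \delta_{\pi_{k+1}} - \delta_{\pi_{k-1}}\rangle = 0$ (orthogonality to $\Span\, C_T$) while $p_{AB}$ still has nonzero component along $\delta_{\pi_k} - \delta_{\pi_{k-1}}$ (so it genuinely enlarges the span to $C_{T_{AB}}$).

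The main obstacle is bookkeeping the inner products of the three cut-semimetrics correctly — in particular getting $\langle \delta_{A|B}, \delta_{A|C}\rangle$ right, since a pair $(i,j)$ is separated by $A|B$ and by $A|C$ simultaneously exactly when one of $i,j$ is in $A$ and the other is in... one needs to check this carefully, and whether the "other" blocks of these bipartitions (everything outside $A\cup B$, etc.) matter. I expect the clean way is to restrict all vectors to the coordinate subspace indexed by pairs from $A\cup B\cup C$ — which is legitimate precisely because $\delta_{\pi_{k+1}}-\delta_{\pi_{k-1}}$ is supported there and all of $\delta_{A|B},\delta_{A|C},\delta_{B|C}$ are too — and then treat $A|B$ as the genuine two-block partition of $A\cup B\cup C$ where $C$ is split between the sides; but one must be consistent about which side $C$ sits on. A cleaner route, which I would actually pursue, is to use the three-block object: note $\delta_{A|B} + \delta_{A|C} + \delta_{B|C} = 2\,\delta_{A|B|C}$ restricted to $A\cup B\cup C$, so $p_{AB}$ is a combination of $\delta_{A|B}$ and $\delta_{A|B|C}$; then orthogonality to $\delta_{\pi_{k+1}}-\delta_{\pi_{k-1}} = \delta_{A|B|C}$ reduces to $\langle \delta_{A|B}, \delta_{A|B|C}\rangle$ and $\langle \delta_{A|B|C}, \delta_{A|B|C}\rangle$, both of which are transparent counts ($ab + 2(ac+bc)$ ... and $ab+ac+bc$), and the stated coefficients pop out immediately. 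I would close by remarking that nonzero component along $\delta_{\pi_k}-\delta_{\pi_{k-1}}$ is clear because $\delta_{A|B}$ contributes only to pairs straddling $A$ and $B$, which land in $\delta_{\pi_k}-\delta_{\pi_{k-1}}$ rather than $\delta_{\pi_{k+1}}-\delta_{\pi_k}$, so $p_{AB}\notin\Span\,C_T$, confirming it is the asserted ray.
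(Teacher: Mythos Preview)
Your plan is essentially the paper's argument, reframed as verification rather than construction: both reduce everything to the single orthogonal-basis vector $\delta_{\pi_{k+1}}-\delta_{\pi_{k-1}} = \delta_{A|B}+\delta_{A|C}+\delta_{B|C}$ (the paper projects $r_{AB}:=\delta_{\pi_k}-\delta_{\pi_{k-1}}=\delta_{A|C}+\delta_{B|C}$ off this vector, you check the stated $p_{AB}$ is orthogonal to it), and both rely on the pairwise orthogonality of $\delta_{A|B},\delta_{A|C},\delta_{B|C}$, which holds because their supports are disjoint.

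Two minor slips to clean up when you write it out. First, $\delta_{A|B}+\delta_{A|C}+\delta_{B|C}=\delta_{A|B|C}$, not $2\delta_{A|B|C}$ (each cross-block pair appears in exactly one of the three summands), so $\langle\delta_{A|B},\delta_{A|B|C}\rangle=ab$ and $\langle\delta_{A|B|C},\delta_{A|B|C}\rangle=ab+ac+bc$; with these the orthogonality check $-\frac{ac+bc}{ab+ac+bc}\cdot ab+\frac{ab}{ab+ac+bc}\cdot(ac+bc)=0$ is immediate. Second, in your closing remark the two differences are swapped: pairs straddling $A$ and $B$ lie in $\delta_{\pi_{k+1}}-\delta_{\pi_k}$ (they are in the \emph{same} block $A\cup B$ of $\pi_k$), while $\delta_{\pi_k}-\delta_{\pi_{k-1}}=\delta_{A|C}+\delta_{B|C}$. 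Neither slip affects the logic --- $p_{AB}$ still has a nonzero component outside $\Span\,C_T$ for the same reason.
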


\begin{proof}

It suffices to start with any vector $r_{AB} \in \Span \, C_{T_{AB}} 
\setminus \Span \, C_{T}$ and 
project it onto the orthogonal complement of $\Span \, C_{T}$.   
We assume the tree $T$ is represented by the chain
$$
K = \pi_{n} \lessdot \pi_{n-1} \lessdot \cdots \lessdot \pi_{k+1} \lessdot \pi_{k-1} \lessdot  \cdots \lessdot \pi_{1}
$$
and the tree $T_{AB}$ by the chain
$$
K_{AB} = \pi_{n} \lessdot \pi_{n-1} \lessdot \cdots \lessdot \pi_{k+1} \lessdot \pi_{k} \lessdot \pi_{k-1} \lessdot  \cdots \lessdot \pi_{1}.
$$
For our vector we choose $r_{AB} = \delta_{\pi_{k}} - \delta_{\pi_{k-1}}$.
This vector is clearly not in $\Span \, C_{K}$ since it involves $\delta_{\pi_{k}}$.
Furthermore, $r_{AB}$ is already orthogonal to all the vectors in 
the orthogonal basis for $\Span \ C_{T}$, except for the vector $\delta_{\pi_{k+1}} - 
\delta_{\pi_{k-1}}$.  Hence, we can project the vector $r_{AB}$ onto the complement
of the space spanned by $\delta_{\pi_{k+1}} - 
\delta_{\pi_{k-1}}$, this will be the same as projecting on the complement of 
$\Span \ C_{T}$.

Note that by Lemma \ref{lem:delminusdel} $r_{AB}  =  \delta_{A\cup B| C} = \delta_{A|C} + \delta_{B|C}$.  Similarly $\delta_{\pi_{k+1}} - 
\delta_{\pi_{k-1}}  =  \delta_{A|B|C} = \delta_{A|B} + \delta_{A|C} + \delta_{B|C}$.
So we want to project $\delta_{A|C} + \delta_{B|C}$ onto the orthogonal
complement of $\delta_{A|B} + \delta_{A|C} + \delta_{B|C}$.
To find $p_{AB}$ it is enough to compute the component of 
$\delta_{A|C} + \delta_{B|C}$ that is perpendicular to 
$\delta_{A|B} + \delta_{B|C} + \delta_{A|C}$, otherwise known as the 
vector rejection of $\delta_{B|C} + \delta_{A|C}$ from 
$\delta_{A|B} + \delta_{B|C} + \delta_{A|C}$.  
Due to the orthogonality of the vectors $\delta_{A|B}, \delta_{A|C}$, and $\delta_{B|C}$ and the fact that, for example, $(||\delta_{A|B}||_{2})^{2} =|A||B|$, we have 
$$
 \frac{(\delta_{A|C} + \delta_{B|C}) \cdot (\delta_{A|B} + \delta_{B|C} + \delta_{A|C})}{(\delta_{A|B} + \delta_{B|C} + \delta_{A|C}) \cdot (\delta_{A|B} + \delta_{B|C} + \delta_{A|C})} = \frac{(||\delta_{B|C}||_{2})^{2} + (||\delta_{A|C}||_{2})^{2}}{(||\delta_{A|B}||_{2})^{2} + (||\delta_{B|C}||_{2})^{2} + (||\delta_{A|C}||_{2})^{2}}
$$
  So the vector rejection of $\delta_{B|C} + \delta_{A|C}$ from 
$\delta_{A|B} + \delta_{B|C} + \delta_{A|C}$  becomes 
$$
\delta_{A|C} + \delta_{B|C} - \frac{ac + bc }{ab + bc + ac} (\delta_{A|B} + \delta_{B|C} + \delta_{A|C}) = 
$$
$$
 -\frac{ac + bc}{ab + ac + bc} \delta_{A|B} +  \frac{ab}{ab + ac + bc} (\delta_{A|C} + \delta_{B|C}).
 $$ 
\end{proof}

This explicit formula for $p_{AB}$ implies that $\Span \ K_{T}$ is 2-dimensional:

\begin{cor}\label{rooted:2dim}
The space $\Span \ K_{T}$ is 2-dimensional.  
\end{cor}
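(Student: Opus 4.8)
The plan is to exhibit three explicit vectors spanning $\Span\ K_T$ and show their span is exactly two-dimensional. Recall the setup: near a generic point of $C_T$, tree space looks like $\rr^k\times K_T$ where $K_T$ is a one-dimensional fan sitting in the orthogonal complement of $\Span\ C_T$, and $K_T$ is built from the three resolution rays $p_{AB}$, $p_{AC}$, $p_{BC}$ (together with the opposite ray which corresponds to the tritomy itself, i.e. the zero direction in the quotient). So $\Span\ K_T$ is precisely $\Span\{p_{AB}, p_{AC}, p_{BC}\}$, and the corollary amounts to the claim that these three vectors span a plane rather than a 3-space.

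First I would write down all three vectors using the symmetry of the formula in Lemma \ref{outlier_projection}: cycling the roles of $A,B,C$ gives
$$
p_{AB} = -\tfrac{ac+bc}{s}\,\delta_{A|B} + \tfrac{ab}{s}(\delta_{A|C}+\delta_{B|C}),
$$
$$
p_{AC} = -\tfrac{ab+bc}{s}\,\delta_{A|C} + \tfrac{ac}{s}(\delta_{A|B}+\delta_{B|C}),
$$
$$
p_{BC} = -\tfrac{ab+ac}{s}\,\delta_{B|C} + \tfrac{bc}{s}(\delta_{A|B}+\delta_{A|C}),
$$
where $s = ab+ac+bc$. Since $\delta_{A|B}$, $\delta_{A|C}$, $\delta_{B|C}$ are three linearly independent (indeed orthogonal, by the argument in Lemma \ref{outlier_projection}) vectors, all the action takes place in the 3-dimensional space $V = \Span\{\delta_{A|B},\delta_{A|C},\delta_{B|C}\}$. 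The next step is to compute, in the basis $(\delta_{A|B},\delta_{A|C},\delta_{B|C})$, the $3\times 3$ matrix $M$ whose columns are $s\,p_{AB}$, $s\,p_{AC}$, $s\,p_{BC}$, and to check that $\det M = 0$ while some $2\times 2$ minor is nonzero. The matrix is
$$
M = \begin{pmatrix} -(ac+bc) & ac & bc \\ ab & -(ab+bc) & bc \\ ab & ac & -(ab+ac) \end{pmatrix},
$$
and I expect the column sums to vanish: adding the three rows gives $(-(ac+bc)+ab+ab,\ ac-(ab+bc)+ac,\ bc+bc-(ab+ac))$ — wait, that is the row sum of columns, not what I want. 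The clean observation is that $p_{AB}+p_{AC}+p_{BC} = 0$: summing the three displayed formulas, the coefficient of $\delta_{A|B}$ is $\tfrac{1}{s}(-(ac+bc)+ac+bc)=0$, and by symmetry the same holds for $\delta_{A|C}$ and $\delta_{B|C}$. Hence the three resolution rays satisfy a linear dependence, so $\dim\Span\{p_{AB},p_{AC},p_{BC}\}\le 2$.

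To finish I would argue the span is not smaller, i.e. $p_{AB}$ and $p_{AC}$ are linearly independent: if $p_{AC} = \lambda p_{AB}$ then comparing the $\delta_{B|C}$-coefficients gives $\tfrac{ac}{s} = \lambda\tfrac{ab}{s}$, so $\lambda = c/b$, while comparing $\delta_{A|B}$-coefficients gives $-(ab+bc) = \lambda(ac+bc)\cdot$ — again I should be careful with which coefficient sits where; the point is that the resulting system forces a relation among $a,b,c$ that fails for positive integers (e.g. one gets $ab+bc = -\tfrac{c}{b}(ac+bc) < 0$, impossible). So $\dim\Span\ K_T = 2$. The only mild obstacle is bookkeeping: making sure the coefficient matrix is transcribed correctly in the chosen basis and that the nonvanishing of a $2\times2$ minor is recorded cleanly; there is no real mathematical difficulty, since the vanishing dependence $p_{AB}+p_{AC}+p_{BC}=0$ and the positivity of $a,b,c$ do all the work. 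Alternatively, and perhaps more in the spirit of the paper, I would simply note that $K_T$ is by construction a fan whose rays lie in a space it already knows to be at most two-dimensional (the orthogonal complement of $\Span\ C_T$ inside $\Span\ C_{T_{AB}}+\Span\ C_{T_{AC}}+\Span\ C_{T_{BC}}$, which has dimension $\dim C_T + 2$), and then the explicit $p_{AB},p_{AC}$ being independent shows the bound is attained.
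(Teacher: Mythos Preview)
Your approach is essentially identical to the paper's: both proofs verify the linear relation $p_{AB}+p_{AC}+p_{BC}=0$ directly from the formulas of Lemma~\ref{outlier_projection} to get $\dim\Span\ K_T\le 2$, and then observe that $p_{AB}$ and $p_{AC}$ are not parallel to get equality. Your exposition is more detailed (and a bit messier, with some self-corrections in the independence check), but the argument is the same.
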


\begin{proof}
The formulae for $p_{AB}$, $p_{AC}$ and $p_{BC}$ given by Theorem \ref{outlier_projection} show that 
$p_{AB}$ and  $p_{AC}$ are not parallel, so that $\Span \ K_{T}$ is at least two-dimensional, but 
$$
p_{AB} + p_{AC} + p_{BC} = \mathbf{0},
$$
where $\mathbf{0}$ denotes the zero vector in $\rr^{n(n-1)/2}$.  So $\Span \ K_{T}$ is exactly two dimensional.   
\end{proof}

Corollary \ref{rooted:2dim} also follows from the fact that the set of all equidistant  tree metrics is a tropical variety in $\rr^{n(n-1)/2}$, as shown in \cite{TropGrass}.

\begin{thm}\label{equidistant:angle}   The angle between the cones $C_{T_{AC}}$ and $C_{T_{BC}}$ is
$$
 \arccos \left(   \frac{-c}{\sqrt{(a + c)(b + c)} }\right). 
$$
\end{thm}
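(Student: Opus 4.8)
The plan is to compute the angle directly from explicit vectors lying in the two resolution cones $C_{T_{AC}}$ and $C_{T_{BC}}$ and orthogonal to $\Span\, C_T$, i.e.\ from the rays $p_{AC}$ and $p_{BC}$ produced by Lemma~\ref{outlier_projection}. Since $\Span\, K_T$ is two-dimensional (Corollary~\ref{rooted:2dim}) and each resolution cone $C_{T_i}$ meets this plane in a single ray, the angle between $C_{T_{AC}}$ and $C_{T_{BC}}$ as subsets of tree space is exactly the angle between $p_{AC}$ and $p_{BC}$, and that is $\arccos\bigl(p_{AC}\cdot p_{BC}/(\|p_{AC}\|_2\|p_{BC}\|_2)\bigr)$.

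First I would write out $p_{AC}$ and $p_{BC}$ using the formula of Lemma~\ref{outlier_projection} (with the roles of $A,B,C$ permuted appropriately): $p_{AC} = -\frac{ab+bc}{ab+ac+bc}\delta_{A|C} + \frac{ac}{ab+ac+bc}(\delta_{A|B}+\delta_{B|C})$ and $p_{BC} = -\frac{ab+ac}{ab+ac+bc}\delta_{B|C} + \frac{bc}{ab+ac+bc}(\delta_{A|B}+\delta_{A|C})$. Then I would use that $\delta_{A|B},\delta_{A|C},\delta_{B|C}$ are pairwise orthogonal (they have disjoint supports, as in the proof of Lemma~\ref{span_basis}) with squared norms $\|\delta_{A|B}\|_2^2 = ab$, $\|\delta_{A|C}\|_2^2 = ac$, $\|\delta_{B|C}\|_2^2 = bc$. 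With $s := ab+ac+bc$, expanding the dot product $p_{AC}\cdot p_{BC}$ picks up only the diagonal terms: the coefficient of $\delta_{A|C}$ in $p_{AC}$ times that in $p_{BC}$ gives $-\frac{ab+bc}{s}\cdot\frac{bc}{s}\cdot ac$, similarly for $\delta_{B|C}$ we get $\frac{ac}{s}\cdot(-\frac{ab+ac}{s})\cdot bc$, and for $\delta_{A|B}$ we get $\frac{ac}{s}\cdot\frac{bc}{s}\cdot ab$. Summing and simplifying the numerator should collapse to something proportional to $c$ (with a minus sign). Likewise I would compute $\|p_{AC}\|_2^2$ and $\|p_{BC}\|_2^2$ by the same diagonal-only expansion; by symmetry $\|p_{AC}\|_2^2$ will be a multiple of $(a+c)$ and $\|p_{BC}\|_2^2$ a multiple of $(b+c)$, with matching normalizing factors of $s$, so that the ratio cleans up to exactly $\dfrac{-c}{\sqrt{(a+c)(b+c)}}$.

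The only real content beyond bookkeeping is the justification that the angle between the two full cones equals the angle between these two rays — but this is already supplied by the structural discussion preceding the section: locally tree space is $\rr^k\times K_T$, the factor $\rr^k = \Span\, C_T$ is orthogonal to $\Span\, K_T$, and each $C_{T_i}$ contributes the single ray $\rr_{\geq 0}\,p_{T_i}$ to $K_T$; angles between cones sharing the common face $C_T$ are measured in the orthogonal complement of that face, which is precisely where the $p$'s live. So the main (and essentially only) obstacle is making sure the algebraic simplification of the numerator and the two norms is carried out correctly; I expect the numerator $p_{AC}\cdot p_{BC}$, after clearing the common $1/s^2$, to equal $-c\cdot ab\cdot$(something)$/\!\!$ matching the factor appearing in the norms, leaving the stated arccosine. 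I would double-check the sign (the cones are "on opposite sides," so a negative cosine and hence an obtuse angle is expected, consistent with the $-c$ in the formula).
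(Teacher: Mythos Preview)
Your proposal is correct and follows essentially the same approach as the paper: write out $p_{AC}$ and $p_{BC}$ from Lemma~\ref{outlier_projection}, exploit the pairwise orthogonality of $\delta_{A|B},\delta_{A|C},\delta_{B|C}$ with $\|\delta_{X|Y}\|_2^2=|X||Y|$, and reduce the cosine to the stated expression. The paper carries out exactly this computation, obtaining $p_{AC}\cdot p_{BC}=-a^2bc/s$ and $\|p_{AC}\|_2^2=abc(a+c)/s$, $\|p_{BC}\|_2^2=abc(b+c)/s$ with $s=ab+ac+bc$, which you may wish to record explicitly when writing up.
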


\begin{proof}
We must calculate the angle between the vectors 
$p_{AC}$ and $p_{BC}$.  This is
$$
\arccos \left( \frac{p_{AC} \cdot p_{BC}}{\|p_{AC} \|_{2}  \| p_{BC} \|_{2}} \right).
$$
Now
$$p_{AC} =  
 -\frac{ab + bc }{ab + ac + bc} \delta_{A|C} +  \frac{ac}{ab+ ac + bc} (\delta_{A|B} + \delta_{B|C})$$
and
$$
p_{BC} =  
 -\frac{ab + ac }{ab + ac + bc} \delta_{B|C} +  \frac{bc}{ab+ ac + bc} (\delta_{A|B} + \delta_{A|C}).
$$
Thus, $p_{AC} \cdot p_{BC}$ is given by

\small
$$
-\frac{ab + bc }{ab + ac + bc} \cdot \frac{bc}{ab+ ac + bc}  \cdot \| \delta_{A|C}\|_{2}^{2}
+\frac{ac}{ab+ ac + bc} \cdot  \frac{bc}{ab+ ac + bc} \cdot  \| \delta_{A|B}\|_{2}^{2}
-\frac{ac}{ab+ ac + bc}  \cdot \frac{ab + ac }{ab + ac + bc} \cdot \| \delta_{B|C}\|_{2}^{2}
$$

$$
\normalsize =  \frac{-a^{2}b c}{ab + bc + ac}.
$$
Similar calculations show that 
$$
\| p_{AC} \|_{2}  = \sqrt{ \frac{ abc(a+c)}{ab + ac+bc}}  \mbox{ and  }  
\| p_{BC} \|_{2}  = \sqrt{ \frac{ abc(b+c)}{ab + ac+bc}}.
$$
Combining these pieces produces the formula in the Theorem.
\end{proof}


\subsection{Tree Space} \label{sec:treeres}

In this section we determine the geometry of the fan
$K_{T}$ for a tritomy tree $T$ in unrooted tree space $\mathcal{T}_{n}$. The approach is similar to the analysis for equidistant tree space, but the structure of tree space is more complicated. 
In particular, finding an orthogonal basis for the space spanned by the rays of the intersection cone for a tritomy is less straightforward.  

Recall that a tritomy $p$ in an unrooted tree is an internal vertex of degree four. The edges adjacent to $p$ induce a four-way set partition $A | B | C | D $ of $[n]$.
Let $T_{AB}$ denote the resolution tree in which there is an edge inducing the split $A \cup B | C \cup D$. Note that $T_{AB} = T_{CD}$.    So there are three resolutions $T_{AB}, T_{AC}$ and $T_{AD}$ of  $T$.  For the remainder of this section,  let $a = |A|$, $b = |B|$, $c = |C|$, and $d = |D|$.  Let $r_{AB} = \delta_{A \cup B | C \cup D}$.

\begin{lemma}\label{outlier_lemma_2}
Let $T$ be an unrooted tree with a tritomy and corresponding partition $A | B | C | D$ of $[n]$.   Then $r_{AB} \in \Span \ C_{T_{AB}} \setminus  \Span \ C_{T}$, and $\dim \Span \ C_{T} = 2n-4$.  
\end{lemma}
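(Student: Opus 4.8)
The plan is to verify each of the three assertions in turn, using the combinatorial description of $\Span\ C_T$ and $\Span\ C_{T_{AB}}$ in terms of splits coming from Proposition \ref{prop:extreme}, together with the ``$\delta$ of a refinement'' identity of Lemma \ref{lem:delminusdel}. First I would observe that $T_{AB}$ is obtained from $T$ by contracting a single internal edge, the one inducing the split $A\cup B\mid C\cup D$; hence $\Sigma(T_{AB}) = \Sigma(T) \sqcup \{A\cup B\mid C\cup D\}$, and by Proposition \ref{prop:extreme} the cone $C_{T_{AB}}$ is simplicial with extreme rays $\{\delta_{S} : S\in\Sigma(T)\}\cup\{\delta_{A\cup B\mid C\cup D}\}$, while $C_T$ has extreme rays $\{\delta_S : S\in\Sigma(T)\}$. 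Since $r_{AB}=\delta_{A\cup B\mid C\cup D}$ is one of these extra extreme rays of the simplicial cone $C_{T_{AB}}$, it certainly lies in $\Span\ C_{T_{AB}}$, and it lies outside $\Span\ C_T$ precisely because the extreme rays of the simplicial cone $C_{T_{AB}}$ are linearly independent, so $\delta_{A\cup B\mid C\cup D}$ cannot be written in terms of the $\delta_S$ with $S\in\Sigma(T)$. This gives $r_{AB}\in\Span\ C_{T_{AB}}\setminus\Span\ C_T$.

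For the dimension count, I would use the fact that $\Span\ C_T$ has dimension equal to $|\Sigma(T)|$, again by simpliciality from Proposition \ref{prop:extreme}. An unrooted phylogenetic $X$-tree with $|X|=n$ and a single degree-four vertex, all other internal vertices trivalent, has $n$ pendant edges and $n-3$ internal edges in the fully resolved case; contracting one internal edge to create the tritomy leaves $n-4$ internal edges, for a total of $|E(T)| = n + (n-4) = 2n-4$. Each edge gives a distinct split, and distinct edges of a tree give distinct splits, so $|\Sigma(T)| = 2n-4$ and hence $\dim\Span\ C_T = 2n-4$. Alternatively, and perhaps more cleanly for the write-up, I would note that a fully resolved binary $T_{AB}$ has $2n-3$ edges, so $\dim\Span\ C_{T_{AB}} = 2n-3$, and $C_T$ is the facet of $C_{T_{AB}}$ obtained by setting the length of the contracted edge to zero, whence $\dim\Span\ C_T = (2n-3)-1 = 2n-4$.

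I do not expect any serious obstacle here: the only point requiring a little care is the claim that $r_{AB}\notin\Span\ C_T$, which must be justified by linear independence of the extreme rays of the \emph{simplicial} cone $C_{T_{AB}}$ rather than by a direct coordinate computation — one should be explicit that Proposition \ref{prop:extreme} guarantees simpliciality even though $T$ itself is not binary, so that the full set $\{\delta_S : S\in\Sigma(T_{AB})\}$ is linearly independent and the subset indexed by $\Sigma(T)$ spans a proper subspace. The edge-counting for a tree with exactly one tritomy is standard (from $\sum_{v}(\deg v) = 2|E|$ and the count of leaves versus internal vertices), so I would state it briefly rather than belabor it. Everything else is bookkeeping with Lemma \ref{lem:delminusdel}, which identifies $r_{AB} = \delta_{A\cup B\mid C\cup D}$ with the difference of $\delta$'s at consecutive levels of the relevant chain and will be reused in the subsequent projection computation.
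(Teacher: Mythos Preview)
Your argument is correct and matches the paper's approach: identify extreme rays with edge-splits via Proposition~\ref{prop:extreme}, count $2n-3$ edges in the binary $T_{AB}$ versus $2n-4$ in $T$, and use simpliciality both for the dimension and (as you make more explicit than the paper does) for the fact that $r_{AB}\notin\Span\,C_T$. One slip to fix in the write-up: it is $T$ that is obtained from $T_{AB}$ by contracting the edge inducing $A\cup B\mid C\cup D$, not the reverse---but your stated conclusion $\Sigma(T_{AB})=\Sigma(T)\sqcup\{A\cup B\mid C\cup D\}$ is correct regardless.
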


\begin{proof}
By Proposition \ref{prop:extreme} each extreme ray of $C_{T_{AB}}$ (equivalently, $T$) comes 
from a split induced by an edge  of $T_{AB}$
(equivalently, an edge of $T$).   
A binary unrooted tree on $n$ leaves has $2n-3$ edges.   So the cone
$C_{T_{AB}}$ has $2n-3$ rays, one for each internal edge of the 
tree $T_{AB}$.  By contracting the edge that induces the split $A \cup B | C \cup D$ 
for any pair we obtain $T$.   
Therefore $C_{T}$ has $2n-4$ extreme rays that correspond to the 
$2n-4$ internal edges of $T$.  Since $C_{T}$ is simplicial, $\dim \Span \ C_{T} = 2n-4$.  
\end{proof}

The projections $p_{AB}$, $p_{AC}$ and $p_{AD}$ of $r_{AB}$, $r_{AC}$ and $r_{AD}$ onto $(\Span \ C_{T})^{\perp}$ are the maximal cones in the fan $K_{T}$.     As in the previous section, we use an orthogonal basis of $\Span \ C_{T}$ to simplify the necessary calculations. 

The vectors in the set $\mathcal{U} = \{ \delta_{A | B \cup C \cup D}, \delta_{B | A \cup C \cup D}, \delta_{C | A \cup B \cup D}, \delta_{D | A \cup B \cup C} \}$ are extreme rays of $T$.  The elements of $\mathcal{U}$  correspond to the four edges in $T$ adjacent to the tritomy $p$.  We show in the next Lemma that to calculate $p_{AB}$ it is sufficient to calculate the projection of $r_{AB}$ onto  $(\Span \ \mathcal{U})^{\perp}$.  First we require some additional notation:  let $e = (u,v)$ be an edge of $T$ not adjacent to $p$ where $v$ is the internal vertex of $T$ on the path to $p$ from $e$.  Let $e' = (w,v)$ be the unique edge in $T$ satisfying the conditions $(i)$ $e \neq e'$ and $(ii)$ $w$ appears on the path from $v$ to $p$ in $T$ (note that it is possible that $w = p$).  Let $A_{e} | B_{e} $ be the split of $[n]$ induced by $e$ and let $A_{e'} | B_{e'}$ be the split of $[n]$ induced by $e'$.  Note $A_{e} \subsetneq A_{e'}$.   Let $a_{e} = |A_{e}|$ and $a_{e'} = |A_{e'}|$.  
Let
$$
\mathcal{V}  = \left\{\delta_{A_{e} | B_{e}} -\frac{a_{e}}{a_{e'} } \delta_{ A_{e'}  |  B_{e'}} \  : \  p \notin e = (u,v),   \ e' \  \text{satisfies} \ (i), (ii) \right\}. 
$$

\begin{lemma}\label{span_basis_2}
Every vector in $\mathcal{V}$ is orthogonal to $r_{AB}, r_{AC}$ and $r_{AD}$ and  $\mathcal{U} \cup \mathcal{V}$ is a basis for $\Span \ C_{T} $.  
\end{lemma}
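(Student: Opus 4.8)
The plan is to verify the two assertions of Lemma~\ref{span_basis_2} separately: first the orthogonality of $\mathcal{V}$ to the three vectors $r_{AB}, r_{AC}, r_{AD}$, and second the claim that $\mathcal{U} \cup \mathcal{V}$ is a basis for $\Span \ C_T$.

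\medskip

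\textbf{Orthogonality of $\mathcal{V}$ to the resolution vectors.} Fix an edge $e = (u,v)$ not adjacent to $p$, with its companion edge $e' = (w,v)$, and consider the basis vector $v_e := \delta_{A_e | B_e} - \frac{a_e}{a_{e'}} \delta_{A_{e'}|B_{e'}}$. I would express each of $r_{AB}$, $r_{AC}$, $r_{AD}$ as a sum $\sum e_{ij}$ over pairs separated by the corresponding bipartition (e.g.\ $r_{AB} = \delta_{A\cup B | C\cup D}$), and similarly write each $\delta_{A_e|B_e}$ and $\delta_{A_{e'}|B_{e'}}$ as a sum of standard basis vectors $e_{ij}$. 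The dot product $v_e \cdot r_{AB}$ then counts (with the weight $-a_e/a_{e'}$ on the second term) the pairs $\{i,j\}$ that are separated both by $e$ (resp.\ $e'$) and by the split $A\cup B | C\cup D$. The key combinatorial observation is that, since $A_e \subsetneq A_{e'}$ and the edge $e'$ lies on the path from $e$ to $p$, the set $A_e$ is entirely contained in one of the four clades $A,B,C,D$, and $B_{e'}$ is contained in the union of the other three; consequently $A_{e'}$ differs from $A_e$ only by leaves lying in that \emph{same} clade (the leaves "between" $e$ and $p$ but not past $p$). From this I can count: the number of pairs separated by $e$ and by $A\cup B|C\cup D$ equals $a_e$ times the number of leaves of $A_{e'} \setminus A_e$ ... — more precisely, a short case analysis on which side of the $A\cup B|C\cup D$ split the block containing $A_e$ falls shows that $\delta_{A_e|B_e}\cdot r_{AB}$ and $\delta_{A_{e'}|B_{e'}} \cdot r_{AB}$ are in the ratio $a_e : a_{e'}$, forcing $v_e \cdot r_{AB} = 0$. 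The identical argument, with the roles of the clades permuted, handles $r_{AC}$ and $r_{AD}$.

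\medskip

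\textbf{$\mathcal{U}\cup\mathcal{V}$ is a basis.} Here I would first count: $|\mathcal{U}| = 4$ and $|\mathcal{V}| = 2n-8$ (one vector for each of the $2n-4$ internal edges of $T$, minus the four edges adjacent to $p$), so $|\mathcal{U} \cup \mathcal{V}| = 2n-4 = \dim \Span \ C_T$ by Lemma~\ref{outlier_lemma_2}. Thus it suffices to show these $2n-4$ vectors span $\Span \ C_T$, equivalently that they are linearly independent, equivalently that each extreme ray $\delta_{A_f|B_f}$ of $C_T$ (for $f$ an internal edge of $T$) lies in $\Span (\mathcal{U}\cup\mathcal{V})$. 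I would argue by induction on the distance from the edge $f$ to the tritomy $p$. The base case is $f$ adjacent to $p$: then $\delta_{A_f|B_f} \in \mathcal{U}$ directly. For the inductive step, given $f = (u,v)$ with companion $f' = (w,v)$ closer to $p$, the defining vector $v_f = \delta_{A_f|B_f} - \frac{a_f}{a_{f'}}\delta_{A_{f'}|B_{f'}}$ lies in $\mathcal{V}$, and $\delta_{A_{f'}|B_{f'}}$ lies in $\Span(\mathcal{U}\cup\mathcal{V})$ by the inductive hypothesis; hence $\delta_{A_f|B_f} = v_f + \frac{a_f}{a_{f'}}\delta_{A_{f'}|B_{f'}}$ does too. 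This shows $\Span(\mathcal{U}\cup\mathcal{V}) \supseteq \Span \ C_T$, and since the dimensions match, equality holds and the set is a basis.

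\medskip

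\textbf{Main obstacle.} The routine part is the basis/spanning argument, which is a clean induction once the edge-companion structure is in place. The delicate part is the orthogonality computation: one must carefully track, for each of the three resolutions and each companion pair $(e,e')$, exactly which leaves lie in $A_{e'}\setminus A_e$ and on which side of the relevant bipartition the block containing $A_e$ sits — the geometry of "$e'$ lies between $e$ and $p$'' is what makes the counting ratio come out to be precisely $a_e : a_{e'}$. I expect this to require a small but genuine case split (is $A_e$ inside $A$, inside $B$, inside $C$, or inside $D$, and correspondingly is it on the same side as $r_{AB}$'s split or the opposite side), each case being an elementary counting identity but needing the tree structure to justify it.
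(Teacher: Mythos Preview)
Your proposal is correct and follows essentially the same approach as the paper: the orthogonality is verified by the same counting (the paper streamlines your anticipated case split via a WLOG assumption $A_{e'}\subset A$, then computes $r_{AB}\cdot\nu = a_e(c+d) - \tfrac{a_e}{a_{e'}}\cdot a_{e'}(c+d)=0$ directly), and your inductive spanning argument is an explicit unpacking of what the paper phrases as ``$\mathcal{U}\cup\mathcal{V}$ is an upper triangular transformation of the extreme rays of $C_T$.'' One minor slip: $T$ has $2n-4$ edges total, not $2n-4$ \emph{internal} edges, but your count $|\mathcal{V}|=2n-8$ is correct regardless.
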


\begin{proof}
Since $T$ has exactly one tritomy, $T$ has $2n-4$ edges.  When $n = 4$, $2n-4 = 4$.  In this case $|\mathcal{U} | =  \dim \Span \ C_{T}$, and $\mathcal{U}$ is a basis for $\Span \ C_{T}$.  So, assume $n > 4$, then $\mathcal{V}$ is not empty because we can find edges $e$ and $e'$ satisfying $p \notin (u,v) = e$ and $e'$ satisfying conditions $(i)$ and $(ii)$.  
We will first show that each element of $\mathcal{V}$ is orthogonal to $r_{AB}, r_{AC}$, and $r_{AD}$. 
Let $\nu \in \mathcal{V}$, then 
$$
\nu =   \delta_{A_{e} | B_{e}} -\frac{a_{e}}{a_{e'} } \delta_{ A_{e'}  |  B_{e'}}
$$
 Note that $A_{e'}$ is contained in one of $A, B, C$, and $D$.  Without loss of generality we may assume that $A_{e'} \subset A$.  
Then it follows directly from the structure of the summands in the vector $\nu$ that 
$$
r_{AB} \cdot \nu = ( a_{e'})(c + d) \left(-\frac{a_{e}}{a_{e'}} \right) + a_{e}(c + d)  = 0
$$
Similar calculations show that $r_{AC}$ and $r_{AD}$ are also orthogonal to $\nu$.  

We obtain $2n-8$ vectors in $\mathcal{V}$ because there are $(2n-4)-4$ edges in $T$ that do not induce vectors in $\mathcal{U}$. So $|\mathcal{U} \cup \mathcal{V}| = 2n-4$.  Since $\mathcal{U} \cup \mathcal{V}$ is comprised of vectors that are linear combinations of split-pseudometrics, $\Span \  \mathcal{U} \cup \mathcal{V} \subset \Span \ C_{T}$.    The set $\mathcal{U} \cup \mathcal{V}$ is also linearly independent since it
can be seen as an upper triangular transformation of the set of extreme
rays of $C_{T}$, which were independent. Thus $\mathcal{U} \cup \mathcal{V}$ is a basis for $\Span \ C_{T}$.
\end{proof}

Due to the structure of the vectors $r_{AB}$ and the elements of $\mathcal{U}$, $p_{AB}$ is constant on the coordinates for each $\delta_{U|V}$ and we can write  
\begin{equation}\label{proj:coeffs}
p_{AB} = \sum_{\{U,V \} \in { \{A, B, C, D \} \choose 2}} w(AB)_{U | V} \cdot  \delta_{U | V}
\end{equation}
the coefficients $w(AB)_{U|V}$ will facilitate computation of dot products and 2-norms.

\begin{thm}\label{unrooted_formulas}
The angle between the cones $C_{T_{AB}}$ and $C_{T_{AC}}$ is 
$$
 \arccos\left(- \frac{bc + ad}{\sqrt{(a + b) (a + c) (b + d) (c + d)}} \right)
$$

\end{thm}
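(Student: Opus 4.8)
The plan is to follow the template of the proof of Theorem~\ref{equidistant:angle}: first obtain explicit descriptions of the rays $p_{AB}$ and $p_{AC}$ that span two of the maximal cones of $K_{T}$, and then the desired angle is
$$\arccos\left(\frac{p_{AB}\cdot p_{AC}}{\|p_{AB}\|_{2}\,\|p_{AC}\|_{2}}\right).$$
Unlike the rooted case, there is no ready-made analogue of Lemma~\ref{outlier_projection} available, so the first task is to pin down the coefficients $w(AB)_{U|V}$ of \eqref{proj:coeffs}.

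First I would work in the orthogonal system of split-pseudometrics $\delta_{X|Y}$, one for each $\{X,Y\} \in {\{A,B,C,D\} \choose 2}$: these six vectors are pairwise orthogonal (disjoint supports), $\|\delta_{X|Y}\|_{2}^{2}=|X||Y|$, and every vector appearing below lies in their span. Writing $u_{A}=\delta_{A|B\cup C\cup D}$, and similarly $u_{B},u_{C},u_{D}$, for the four elements of $\mathcal{U}$, one has $u_{A}=\delta_{A|B}+\delta_{A|C}+\delta_{A|D}$ and $r_{AB}=\delta_{A|C}+\delta_{A|D}+\delta_{B|C}+\delta_{B|D}$, whence the identity $r_{AB}=u_{A}+u_{B}-2\delta_{A|B}$ (and, likewise, $r_{AC}=u_{A}+u_{C}-2\delta_{A|C}$). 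By Lemma~\ref{span_basis_2}, $p_{AB}$ is the projection of $r_{AB}$ onto $(\Span\ \mathcal{U})^{\perp}$; since $u_{A},u_{B},u_{C}\in\Span\ \mathcal{U}$, it follows that
$$p_{AB}=-2\,q_{AB}\quad\text{and}\quad p_{AC}=-2\,q_{AC},\quad\text{where}\quad q_{XY}:=\mathrm{proj}_{(\Span\ \mathcal{U})^{\perp}}(\delta_{X|Y}).$$
Thus everything reduces to projecting a single $\delta_{X|Y}$ off $\Span\ \mathcal{U}$. In the atom basis the Gram matrix $G$ of $\mathcal{U}$ has $G_{XX}=|X|(n-|X|)$ and $G_{XY}=|X||Y|$ for $X\neq Y$, i.e.\ $G=vv^{T}+\operatorname{diag}\!\big(|X|(n-2|X|)\big)$ with $v=(a,b,c,d)^{T}$ and $n=a+b+c+d$; this is a rank-one update of a diagonal matrix, so I would solve $Gt=(ab,\,ab,\,0,\,0)^{T}$ — the right-hand side being $(\delta_{A|B}\cdot u_{A},\dots,\delta_{A|B}\cdot u_{D})^{T}$ — for the coefficients of $\mathrm{proj}_{\Span\ \mathcal{U}}(\delta_{A|B})=\sum_{X}t_{X}u_{X}$ by Sherman--Morrison (legitimate whenever the diagonal part is invertible, since $G$ is; the finitely many degenerate size configurations follow by continuity), and likewise solve $Gt'=(ac,\,0,\,ac,\,0)^{T}$ for $\mathrm{proj}_{\Span\ \mathcal{U}}(\delta_{A|C})$. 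Reading off \eqref{proj:coeffs} then gives $w(AB)_{A|B}=-2(1-t_{A}-t_{B})$ and $w(AB)_{X|Y}=2(t_{X}+t_{Y})$ for the other five pairs.

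With these formulas I would compute the three quantities needed for the angle using idempotence and self-adjointness of the orthogonal projection together with orthogonality of the atoms:
$$p_{AB}\cdot p_{AC}=4\,q_{AB}\cdot\delta_{A|C}=-4ac(t_{A}+t_{C}),\qquad \|p_{AB}\|_{2}^{2}=4ab(1-t_{A}-t_{B}),\qquad \|p_{AC}\|_{2}^{2}=4ac(1-t'_{A}-t'_{C}),$$
each a rational function of $a,b,c,d$ via the Sherman--Morrison solutions. I expect the algebra to collapse these to $p_{AB}\cdot p_{AC}=-\lambda(bc+ad)$, $\|p_{AB}\|_{2}^{2}=\lambda(a+b)(c+d)$ and $\|p_{AC}\|_{2}^{2}=\lambda(a+c)(b+d)$ for one common positive factor $\lambda=\lambda(a,b,c,d)$; a useful internal check is the relation $p_{AB}+p_{AC}+p_{AD}=\mathrm{proj}_{(\Span\ \mathcal{U})^{\perp}}(u_{A}+u_{B}+u_{C}+u_{D})=\mathbf{0}$, which forces $\|p_{AD}\|_{2}^{2}=\|p_{AB}\|_{2}^{2}+\|p_{AC}\|_{2}^{2}+2\,p_{AB}\cdot p_{AC}$ and is consistent with this ansatz. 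Dividing, $\lambda$ cancels and
$$\frac{p_{AB}\cdot p_{AC}}{\|p_{AB}\|_{2}\,\|p_{AC}\|_{2}}=\frac{-\lambda(bc+ad)}{\sqrt{\lambda(a+b)(c+d)}\;\sqrt{\lambda(a+c)(b+d)}}=-\frac{bc+ad}{\sqrt{(a+b)(a+c)(b+d)(c+d)}},$$
which is the claimed expression for the angle.

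The step I expect to be the main obstacle is precisely this final simplification: the Sherman--Morrison solutions carry the denominators $n-2|X|=|Y|+|Z|+|W|-|X|$, and verifying that everything telescopes into the clean symmetric form above — in particular that the three quantities share the common factor $\lambda$ — requires careful bookkeeping, together with a separate (routine) continuity argument to cover the degenerate clade-size configurations where some $n-2|X|=0$. A convenient sanity check is the balanced case $a=b=c=d$, where $p_{AB}\cdot p_{AC}=-\tfrac{1}{2}\|p_{AB}\|_{2}\|p_{AC}\|_{2}$, giving angle $\arccos(-\tfrac{1}{2})$.
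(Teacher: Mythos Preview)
Your proposal is correct and takes a genuinely different route from the paper. The paper computes $p_{AB}$ by writing down the $4\times 5$ matrix of dot products of the $u_{X}$ against $(r_{AB},u_{A},u_{B},u_{C},u_{D})$, finding a null-space vector, and then changing basis via the matrix $K_{AB}$ to read off all six coefficients $w(AB)_{U|V}$ explicitly; the final dot product and norms are then assembled from these coefficients. You instead exploit the identity $r_{AB}=u_{A}+u_{B}-2\delta_{A|B}$ to reduce to projecting a single atom, recognize the Gram matrix of $\mathcal{U}$ as a rank-one perturbation of a diagonal matrix and invert it via Sherman--Morrison, and use idempotence and self-adjointness of the projection so that only the combinations $t_{A}+t_{B}$, $t_{A}+t_{C}$, $t'_{A}+t'_{C}$ are ever needed rather than the full coefficient vector. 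Your approach is more structured and avoids displaying the six-entry coefficient list, at the cost of a side continuity argument for the configurations with some $n=2|X|$; the paper's explicit coefficients, on the other hand, are reused downstream (Theorem~\ref{unrooted_bisector}) and require no case handling. Your internal check $p_{AB}+p_{AC}+p_{AD}=\mathbf{0}$ and the resulting norm identity $(a+b)(c+d)+(a+c)(b+d)-2(bc+ad)=(a+d)(b+c)$ are both correct and do constrain the answer to the stated form.
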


\begin{proof} By Lemma \ref{span_basis_2}, to find $p_{AB}$, $p_{AC}$, and $p_{AD}$ it is sufficient to calculate the projection of $r_{AB}$, $r_{AC}$, and $r_{AD}$ onto $(\Span \ \mathcal{U})^{\perp}$. We will find $p_{AB}$ by calculating the coefficients $w(AB)_{U | V}$.  First, we construct a matrix $M_{AB}$ where $(M_{AB})_{i,j}$ is obtained (up to row operations) by taking the dot product of the vector indexing row $i$ and column $j$:

$$
M_{AB} = \begin{matrix}[c|ccccc] \quad & r_{AB} & \delta_{A | B \cup C \cup D} &  \delta_{B | A \cup C \cup D} & \delta_{C | A \cup B \cup D} & \delta_{D | A \cup B \cup C } \\ \hline \delta_{A | B \cup C \cup D} & c + d & b + c + d & b & c & d \\  \delta_{B | A \cup C \cup D} & c + d & a & a + c + d & c & d \\  \delta_{C | A \cup B \cup D} & a + b & a & b & a + b + d & d  \\  \delta_{D | A \cup B \cup C} & a + b & a & b & c & a + b + c \end{matrix}
$$
Next, let $K_{AB}$ be the matrix given below:
$$
K_{AB} = \begin{matrix}[c|ccccc] \quad & r_{AB} & \delta_{A | B \cup C \cup D} &  \delta_{B | A \cup C \cup D} & \delta_{C | A \cup B \cup D} & \delta_{D | A \cup B \cup C } \\ \hline \delta_{A | B} & 0 & 1 & 1 & 0 & 0 \\  \delta_{A | C } & 1  & 1 & 0 & 1 & 0 \\ \delta_{A | D } & 1  & 1 & 0 & 0 & 1 \\ \delta_{B | C } & 1  & 0 & 1 & 1 & 0  \\ \delta_{B | D} & 1  & 0 & 1 & 0 & 1 \\ \delta_{C | D } & 0  & 0 & 0 & 1 & 1 \end{matrix} 
$$

Let $\sigma_{AB} = \sigma_{1} r_{AB} + \sigma_{2} \delta_{A | B \cup C \cup D}  +  \sigma_{3} \delta_{B | A \cup C \cup D} +  \sigma_{4} \delta_{C | A \cup B \cup D} +  \sigma_{5} \delta_{D | A \cup B \cup C }$ be a vector in the null space of the matrix $M_{AB}$.  Up to a scalar multiple, 

$$
K_{AB}( \sigma_{AB}) = \left[\begin{matrix} w(AB)_{A|B} \\ w(AB)_{A|C} \\ w(AB)_{A|D}  \\  w(AB)_{B|C} \\  w(AB)_{B|D} \\  w(AB)_{C|D} \end{matrix} \right] =   \left[\begin{matrix}  (a + b) c d (c + d) \\  -b d (b c + a d)  \\ -b c (a c + 
     b d) \\  -a d (a c + b d) \\ -a c (b c + a d) \\  a b (a + b) (c + 
     d) \end{matrix} \right]
$$

Then 

\addvspace{0.1cm}

\begin{equation}
\label{formulaOne}
p_{AB} \cdot p_{AC} = \sum_{ \{U, V \} \in { \{A, B, C, D \} \choose 2}} |U| \cdot |V| \cdot w(AB)_{U|V} \cdot w(AC)_{U|V} 
\end{equation}
and
\begin{equation}
\label{formulaTwo}
||p_{AB}||_{2}  = \sqrt{ \sum_{ \{U, V \} \in { \{A, B, C, D \} \choose 2}} |U| \cdot |V| \cdot [w(AB)_{U|V} ]^{2}}. 
\end{equation}
We use (\ref{formulaOne}) and (\ref{formulaTwo}) to obtain the formulae for the angle measures between the resolution cones.  
\end{proof}

\begin{cor}\label{unrooted:2dim}
The space $\Span \ K_{T}$ is 2-dimensional when $T$ is unrooted.
\end{cor}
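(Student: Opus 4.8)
The plan is to mimic the proof of Corollary \ref{rooted:2dim}. By construction the maximal cones of $K_{T}$ are the three rays spanned by $p_{AB}$, $p_{AC}$, and $p_{AD}$, so $\Span \ K_{T} = \Span \{p_{AB}, p_{AC}, p_{AD}\}$, and it is enough to show this span is at most and at least two-dimensional. The rays are nonzero: by Lemma \ref{outlier_lemma_2} each $r_{XY} \notin \Span \ C_{T}$, so its projection onto $(\Span \ C_{T})^{\perp}$ is a genuine ray.

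For the upper bound I would exhibit a linear dependence among $p_{AB}$, $p_{AC}$, $p_{AD}$, exactly as $p_{AB} + p_{AC} + p_{BC} = \mathbf{0}$ in the equidistant case. The key observation is that $r_{AB}, r_{AC}, r_{AD}$ already satisfy a dependence modulo $\Span \ C_{T}$. Expanding split-pseudometrics, one checks $r_{AB} + r_{AC} + r_{AD} = 2\,\delta_{A|B|C|D}$, since every cross-block pair $\{U,V\}$ with $U \neq V$ among $A,B,C,D$ is separated by exactly two of the three bipartitions $A\cup B \mid C\cup D$, $A\cup C \mid B\cup D$, $A\cup D \mid B\cup C$. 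The same count applied to $\mathcal{U}$ gives $\delta_{A|B\cup C\cup D} + \delta_{B|A\cup C\cup D} + \delta_{C|A\cup B\cup D} + \delta_{D|A\cup B\cup C} = 2\,\delta_{A|B|C|D}$, so $\delta_{A|B|C|D} \in \Span \ \mathcal{U} \subseteq \Span \ C_{T}$. Hence $r_{AB} + r_{AC} + r_{AD} \in \Span \ C_{T}$, and applying the orthogonal projection onto $(\Span \ C_{T})^{\perp}$ yields $p_{AB} + p_{AC} + p_{AD} = \mathbf{0}$, so $\dim \Span \ K_{T} \leq 2$.

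For the lower bound I would invoke Theorem \ref{unrooted_formulas}: the angle between $C_{T_{AB}}$ and $C_{T_{AC}}$, equivalently between $p_{AB}$ and $p_{AC}$, is $\arccos\!\left(-\frac{bc+ad}{\sqrt{(a+b)(a+c)(b+d)(c+d)}}\right)$, and it suffices to check that this angle is neither $0$ nor $\pi$, i.e. that the argument of $\arccos$ lies strictly in $(-1,1)$. Since all clade sizes are at least $1$, the numerator $bc+ad$ is positive, so the argument is negative; and the elementary inequalities $bc + ad \leq (a+c)(b+d)$ and $bc + ad \leq (a+b)(c+d)$ (whose differences are $ab+cd$ and $ac+bd$ respectively) give $(bc+ad)^{2} \leq (a+b)(a+c)(b+d)(c+d)$, with equality impossible. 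Thus the argument lies in $(-1,0)$, so $p_{AB}$ and $p_{AC}$ are not parallel and $\dim \Span \ K_{T} \geq 2$. (Alternatively one could read the sign patterns directly off the coefficient vectors $w(AB)$ and $w(AC)$ from the proof of Theorem \ref{unrooted_formulas} and note they are not proportional.) Combining the two bounds gives $\dim \Span \ K_{T} = 2$; as in the equidistant case this also follows from the fact that $\mathcal{T}_{n}$ is a tropical variety.

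Given Theorem \ref{unrooted_formulas}, there is no real obstacle here: the only mildly delicate point is the elementary inequality $(bc+ad)^{2} \leq (a+b)(a+c)(b+d)(c+d)$ used to rule out a straight angle, and this is routine.
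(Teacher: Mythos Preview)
Your proof is correct and follows the same overall strategy as the paper (show $\dim \leq 2$ via a linear dependence and $\dim \geq 2$ via independence of two of the rays, exactly paralleling Corollary \ref{rooted:2dim}). The paper's own proof is a one-line sketch: it simply says to use the explicit coefficient formula \eqref{proj:coeffs} for $p_{AB}$ to check that $\{p_{AB}, p_{AC}\}$ is independent while $\{p_{AB}, p_{AC}, p_{AD}\}$ is dependent.

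Where your argument differs is in the upper bound: rather than grinding through the coefficient vectors $w(AB)_{U|V}$ to exhibit a dependence, you observe directly that $r_{AB} + r_{AC} + r_{AD} = 2\,\delta_{A|B|C|D}$ equals the sum of the four vectors in $\mathcal{U}$ and hence lies in $\Span\, C_{T}$, so the projections sum to zero. This is cleaner and more conceptual than what the paper suggests, and it makes the analogy with the equidistant identity $p_{AB} + p_{AC} + p_{BC} = \mathbf{0}$ explicit. For the lower bound, your use of the angle formula from Theorem \ref{unrooted_formulas} together with the elementary inequality $(bc+ad)^{2} < (a+b)(a+c)(b+d)(c+d)$ is equivalent in content to checking non-proportionality of the coefficient vectors (since the angle was computed from them), just packaged differently; the inequality check is routine, as you note.
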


\begin{proof}
As in the case of rooted trees, we can use the formula in (\ref{proj:coeffs}) to show that the set $\{p_{AB}, p_{AC} \}$ is linearly independent, but the set $\{p_{AB}, p_{AC}, p_{BC} \}$ is linearly dependent. 
\end{proof}

Corollary \ref{unrooted:2dim} also follows from the fact that the set of all arbitrary tree metrics is a tropical variety in $\rr^{n(n-1)/2}$, as shown in \cite{TropGrass}.


\section{Distance-Based Methods Near a Tritomy}\label{sec:tritomymethods}

In this section, we analyze the performance of distance-based methods
around a tritomy using the results on the geometry of tree space
from Section \ref{sec:tritomygeometry}.  The basic observation
is this:  any phylogenetic algorithm decomposes the set of
all dissimilarity maps into regions based upon which combinatorial
type of tree gets reconstructed by the algorithm.  While in many 
cases we do not have a complete understanding of the geometry
of these decompositions across all of $\rr^{n(n-1)/2}_{\geq 0}$,
we can describe the geometry in a small neighborhood of a tree metric
with a single tritomy.  It is this geometry which we
explore in the present section.


\subsection{Least Squares Phylogeny}

Let $C_{1}, \ldots, C_{r}$ be subsets of $\rr^{n(n-1)/ 2}$.    
The \textit{Voronoi cell} $V_{k}$ associated with the subset $C_{k}$ is the set of all points
$$
V_{k}  = \{ \mathbf{x} \in \rr^{n(n-1) / 2} \ | \ d(\mathbf{x}, C_{k} ) \leq d(\mathbf{x}, C_{j} ) \ \text{for \ all } j \neq k \}
$$
where $d(\mathbf{x}, C_{k})= \inf\{ || \mathbf{x}-  \mathbf{a} ||_{2} \ | \ \mathbf{a} \in C_{k} \}$.
The \textit{Voronoi decomposition} is the subdivision of $\rr^{n(n-1)/ 2}$ into Voronoi cells of the set $\{ C_{k} \}$.  When $\{ C_{k} \}$ is the collection of
cones associated to all possible combinatorial trees with leaf set $[n]$, the Voronoi cells 
comprise the subdivision of space induced by the least squares phylogeny problem. 

While the Voronoi decomposition of a finite set of points is
well-known to be a polyhedral subdivision of space, the Voronoi decomposition
induced by a collection of higher dimensional polyhedra can be a 
complicated semi-algebraic decomposition.  Hence, the Voronoi decomposition
induced by the tree cones is probably not polyhedral.
We saw in Section 3 that in a neighborhood of a tree metric $T$ with a single tritomy,  tree space has the form $\rr^{k}  \times K_{T}$, where $k = \dim \  \Span \ C_{T}$ and 
$K_{T}$ is a one-dimensional fan with three rays that sits naturally inside a two dimensional linear space $\Span \  K_{T}$.  In this setting it is easy to describe the Voronoi
decomposition.


\begin{prop}\label{prop:openbook}
Let $T$ be a tree metric in $\rr^{n(n-1)/ 2} _{\geq 0}$ with local tree space $\rr^{k} \times K_{T}$. 
The boundary between the Voronoi cells for the resolution cones $C_{T_{AB}}$ and $C_{T_{AC}}$ are completely determined by the angle bisector in $\Span \ K_{T}$  between $p_{AB}$ and $p_{AC}$.
\end{prop}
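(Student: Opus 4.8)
The plan is to exploit the product structure $\rr^{k}\times K_{T}$ from Section \ref{sec:tritomygeometry} to reduce the Voronoi computation to a two-dimensional problem in $\Span K_{T}$, and then to analyze that planar problem directly. Fix a generic point $x$ in the relative interior of $C_{T}$ and split the ambient space orthogonally as $\rr^{n(n-1)/2} = \Span C_{T} \oplus \Span K_{T} \oplus W$, writing a nearby point as $\mathbf{x} = x + u + v + t$ with $u \in \Span C_{T}$, $v \in \Span K_{T}$, $t \in W$. First I would record the local shape of each resolution cone near $x$: since $x$ lies in the relative interior of the facet $C_{T}$ of the simplicial cone $C_{T_{AB}}$, and the extra extreme ray of $C_{T_{AB}}$ over $C_{T}$ differs from $p_{AB}$ only by a vector in $\Span C_{T}$ (this is exactly how $p_{AB}$ was constructed as a rejection onto $(\Span C_{T})^{\perp}$ in Lemmas \ref{outlier_projection} and \ref{span_basis_2}), the cone $C_{T_{AB}}$ coincides near $x$ with $\{x + u' + s\,p_{AB} : u' \in \Span C_{T}\text{ small},\ s \ge 0\text{ small}\}$, and likewise for $C_{T_{AC}}$ and for the third resolution cone.

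From this and the orthogonality of the three summands, the nearest-point computation decouples: for $\mathbf{x}$ in a small enough neighborhood $U$ of $x$,
$$
d(\mathbf{x}, C_{T_{AB}})^{2} \;=\; \|t\|_{2}^{2} \;+\; d_{\Span K_{T}}\!\bigl(v,\ \rr_{\ge 0}\,p_{AB}\bigr)^{2},
$$
since the $\Span C_{T}$ component is unconstrained (and $u$ is small, so it contributes nothing), while the $\Span K_{T}$ component must be minimized over the ray $\rr_{\ge 0}\,p_{AB}$; the same formula holds with $AB$ replaced by $AC$ or by the third resolution. Shrinking $U$ if necessary, the only combinatorial tree cones relevant to the Voronoi decomposition on $U$ are the three resolution cones: every tree cone not containing $x$ is closed and misses $x$, hence stays a positive distance from $U$, whereas the three resolution cones all contain $x$ (indeed $C_{T}$ is their common face). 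Therefore the wall between $V_{AB}$ and $V_{AC}$ on $U$ is $V_{AB}\cap V_{AC} = \{\,d(\mathbf{x},C_{T_{AB}}) = d(\mathbf{x},C_{T_{AC}}) \le d(\mathbf{x},C_{T_{\mathrm{third}}})\,\}$; the $\|t\|_{2}^{2}$ terms cancel throughout, so this condition depends only on $v$, i.e.\ only on the component of $\mathbf{x}$ in $\Span K_{T}$, and it becomes
$$
d_{\Span K_{T}}\!\bigl(v, \rr_{\ge 0}p_{AB}\bigr) = d_{\Span K_{T}}\!\bigl(v, \rr_{\ge 0}p_{AC}\bigr) \le d_{\Span K_{T}}\!\bigl(v, \rr_{\ge 0}p_{\mathrm{third}}\bigr).
$$

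It remains to identify this locus inside the plane $\Span K_{T}$, which is where the real work is. Here I would invoke the structural facts already in hand: the three rays $p_{AB}, p_{AC}, p_{\mathrm{third}}$ span $\Span K_{T}$ and sum to zero (Corollaries \ref{rooted:2dim} and \ref{unrooted:2dim}), and their pairwise angles are obtuse, since the cosines in Theorems \ref{equidistant:angle} and \ref{unrooted_formulas} are negative; hence the three angles lie in $(90^{\circ},180^{\circ})$ and any two of them sum to more than $180^{\circ}$. A short planar case analysis then finishes it: the set of $v$ equidistant from the two rays $\rr_{\ge 0}p_{AB}$ and $\rr_{\ge 0}p_{AC}$ is the interior angle bisector ray of $p_{AB},p_{AC}$ together with a \emph{reflex} sector on the opposite side (the region where both distances are realized at the origin and equal $\|v\|_{2}$); but because the other two angles each exceed $90^{\circ}$, that reflex sector lies entirely within $90^{\circ}$ of the third ray $p_{\mathrm{third}}$, so on the reflex sector one has $d(v,\rr_{\ge 0}p_{\mathrm{third}}) < \|v\|_{2}$ and the inequality with the third cone fails, whereas on the interior bisector ray that inequality holds. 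Consequently $V_{AB}\cap V_{AC}\cap U$ equals $(x + \Span C_{T} + W)$ plus the interior angle bisector ray of $p_{AB},p_{AC}$ in $\Span K_{T}$, which is the asserted statement. The main obstacle is precisely this last planar analysis: the equidistant set of two rays is not a single line, and excising the spurious reflex sector is exactly where one needs the obtuse-angle inequalities, in particular that any two of the three angles sum to more than $180^{\circ}$.
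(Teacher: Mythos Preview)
Your argument is correct and follows the same outline as the paper's proof: use the orthogonal product structure to reduce to the plane $\Span K_{T}$, then identify the Voronoi wall there as the angle bisector. The paper's proof is considerably terser---it simply asserts that the equidistant set of two rays in a plane is the angle bisector---whereas you correctly observe that the equidistant set also contains a reflex sector where both nearest points are the origin, and you use the obtuse-angle structure (equivalently, $p_{AB}+p_{AC}+p_{\mathrm{third}}=0$) to show that this spurious region is absorbed by the third Voronoi cell; this is a genuine detail the paper glosses over.
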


\begin{proof}
The Euclidean distance between the cones $C_{T_{AB}}$ and $C_{T_{AC}}$ is the sum of the distance between them in the two orthogonal spaces $\Span \ C_{T}$ and $\Span \ K_{T}$.  Of these two distances, only the distance in the two-dimensional space $\Span \ K_{T}$ is nonzero; this distance is determined by the angle between the maximal cones in the 1-dimensional polyhedral fan $K_{T}$.  The set of all points in the plane $\Span \ K_{T}$ equidistant between two vectors emanating from the origin is the bisector of the angle between the two vectors.  
\end{proof}

Proposition \ref{prop:openbook} allows us to easily compute the
relative size of the Voronoi regions around a polytomy for
either equidistant or ordinary tree metrics.  The next theorem gives a formula for the boundary between the Voronoi regions. 

\begin{thm}\label{equidistant_bisector}
Let $T$ be a ranked, rooted tree with a single tritomy.  The boundary of the Voronoi cell in $\Span \ K_{T}$ between the resolution cones $C_{T_{AB}}$ and $C_{T_{AC}}$ is spanned by the vector  
$$
 \frac{p_{AB}}{\sqrt{a + b}} +  \frac{ p_{AC}}{\sqrt{a + c}}.
 $$
\end{thm}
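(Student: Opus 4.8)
The plan is to reduce the statement to an elementary fact about angle bisectors and then insert the norms of the projection vectors $p_{AB}$ and $p_{AC}$, most of which are already available from the proof of Theorem~\ref{equidistant:angle}. By Proposition~\ref{prop:openbook}, the boundary between the Voronoi cells for $C_{T_{AB}}$ and $C_{T_{AC}}$, intersected with the plane $\Span \ K_T$, is exactly the bisector of the angle between the rays $p_{AB}$ and $p_{AC}$. The internal bisector of the angle between two nonzero vectors $u$ and $v$ emanating from the origin is spanned by $u/\|u\|_2 + v/\|v\|_2$, so it suffices to show that $p_{AB}/\|p_{AB}\|_2 + p_{AC}/\|p_{AC}\|_2$ is a positive scalar multiple of the vector $p_{AB}/\sqrt{a+b} + p_{AC}/\sqrt{a+c}$ named in the theorem.

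First I would record the norms $\|p_{AB}\|_2$ and $\|p_{AC}\|_2$. The computation in the proof of Theorem~\ref{equidistant:angle} already gives $\|p_{AC}\|_2 = \sqrt{abc(a+c)/(ab+ac+bc)}$. The same computation, applied to the $A \leftrightarrow B$–symmetric expression for $p_{AB}$ from Lemma~\ref{outlier_projection} together with the orthogonality of $\delta_{A|B}, \delta_{A|C}, \delta_{B|C}$ and the squared norms $\|\delta_{A|B}\|_2^2 = ab$, $\|\delta_{A|C}\|_2^2 = ac$, $\|\delta_{B|C}\|_2^2 = bc$, yields $\|p_{AB}\|_2 = \sqrt{abc(a+b)/(ab+ac+bc)}$ (only this last norm needs to be checked; the other is quoted). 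Substituting, one gets
\[
\frac{p_{AB}}{\|p_{AB}\|_2} + \frac{p_{AC}}{\|p_{AC}\|_2}
= \sqrt{\frac{ab+ac+bc}{abc}}\left( \frac{p_{AB}}{\sqrt{a+b}} + \frac{p_{AC}}{\sqrt{a+c}} \right),
\]
and since $\sqrt{(ab+ac+bc)/(abc)}$ is a positive real number, the two vectors span the same line. Combining this with Proposition~\ref{prop:openbook} gives the theorem.

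The only genuinely delicate point — and the one I would make explicit — is that the relevant Voronoi boundary is the \emph{internal} bisector (the one lying in the cone spanned by $p_{AB}$ and $p_{AC}$), not the external one, and that along this ray the distances to the cones $C_{T_{AB}}$ and $C_{T_{AC}}$ are realized by orthogonal projection rather than by the distance to the apex. This is handled by noting that the angle between $p_{AB}$ and $p_{AC}$, computed as in Theorem~\ref{equidistant:angle}, lies strictly between $90^\circ$ and $180^\circ$; hence a point on the internal bisector makes an acute angle with each of the two rays, so its nearest point on each ray is the foot of the perpendicular, and equality of the two perpendicular distances there is precisely the Voronoi condition. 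Moreover, within this wedge the third ray $p_{BC}$ (which satisfies $p_{AB} + p_{AC} + p_{BC} = \mathbf 0$ by Corollary~\ref{rooted:2dim}) is never the closest, so the bisector really is the boundary separating the $T_{AB}$-cell from the $T_{AC}$-cell. I would include a sentence to this effect so that the appeal to "angle bisector" in Proposition~\ref{prop:openbook} is unambiguous. Beyond this, the argument is routine bookkeeping.
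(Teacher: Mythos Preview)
Your proposal is correct and follows essentially the same route as the paper: invoke Proposition~\ref{prop:openbook} to reduce to the angle bisector, compute $\|p_{AB}\|_2$ and $\|p_{AC}\|_2$ from Lemma~\ref{outlier_projection}, and factor out the common scalar $\sqrt{(ab+ac+bc)/(abc)}$. Your additional paragraph justifying that the \emph{internal} bisector is the relevant one (and that $p_{BC}$ does not interfere) is a welcome point of rigor that the paper's proof omits.
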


\begin{proof}
By Proposition \ref{prop:openbook} the boundary of the cell we wish to compute is given by by the angle bisector in $\Span \ K_{T}$ between $p_{AB}$ and $p_{AC}$, which is spanned by the normalized average of the two vectors.  By  Lemma \ref{outlier_projection}  we have
$$
\| p_{AB} \|_{2}  = \sqrt{ \frac{ abc(a+b)}{ab + ac+bc}}  \mbox{ and  }  
\| p_{AC} \|_{2}  = \sqrt{ \frac{ abc(a+c)}{ab + ac+bc}}
$$
Therefore
$$
\frac{p_{AB}}{\|p_{AB}\|} + \frac{p_{AC}}{\|p_{AC}\|} =  \sqrt{ \frac{ab + ac+bc}{abc}} \left( \frac{p_{AB}}{\sqrt{a + b}} +  \frac{ p_{AC}}{\sqrt{a + c}} \right)
$$
So $ \frac{p_{AB}}{\sqrt{a + b}} +  \frac{ p_{AC}}{\sqrt{a + c}}$  spans the boundary of the Voronoi cells for the two cones in $\Span \ K_{T}$.  
\end{proof}

\begin{thm}\label{unrooted_bisector} 
Let $T$ be an unrooted tree with a single tritomy.  The boundary of the Voronoi cell in $\Span \ K_{T}$ between the resolution cones $C_{T_{AB}}$ and $C_{T_{AC}}$ is spanned by the vector 
$$
\frac{p_{AB}}{\sqrt{(a + b)(c + d)} } + \frac{p_{AC}}{\sqrt{(a + c)(b + d)}}.
$$
\end{thm}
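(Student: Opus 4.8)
The plan is to follow the proof of Theorem~\ref{equidistant_bisector} almost verbatim, substituting the unrooted norm computation coming from Theorem~\ref{unrooted_formulas} for the rooted one in Lemma~\ref{outlier_projection}. By Proposition~\ref{prop:openbook}, the boundary between the Voronoi cells of $C_{T_{AB}}$ and $C_{T_{AC}}$ inside the two-dimensional plane $\Span \ K_T$ is the bisector of the angle between the rays $p_{AB}$ and $p_{AC}$, and that bisector is spanned by the normalized average $\frac{p_{AB}}{\|p_{AB}\|_2}+\frac{p_{AC}}{\|p_{AC}\|_2}$. So the whole theorem reduces to computing $\|p_{AB}\|_2$ and $\|p_{AC}\|_2$ and factoring out the quantity they share.

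To carry that out I would use the representation $p_{AB}=\sum_{\{U,V\}}w(AB)_{U|V}\,\delta_{U|V}$ of~(\ref{proj:coeffs}), with the explicit coefficient vector computed in the proof of Theorem~\ref{unrooted_formulas} and the same scalar convention fixed for all three resolutions, exactly as in~(\ref{formulaOne}) and~(\ref{formulaTwo}); the precise choice of scalar is harmless since it cancels in the normalized average. Feeding the six coefficients into~(\ref{formulaTwo}) and grouping the summands into the pairs $\{A|C,\,B|D\}$, $\{A|D,\,B|C\}$, $\{A|B,\,C|D\}$ -- the two coefficients in each pair sharing the common factor $bc+ad$, $ac+bd$, and $(a+b)(c+d)$ respectively -- and pulling $abcd$ out of every term, one obtains
$$
\|p_{AB}\|_2^2 \;=\; abcd\bigl[(bc+ad)^2(ac+bd)+(ac+bd)^2(bc+ad)+(ab+cd)(a+b)^2(c+d)^2\bigr].
$$
Since $(bc+ad)+(ac+bd)=(a+b)(c+d)$, the first two bracketed terms combine, leaving $(a+b)(c+d)\bigl[(bc+ad)(ac+bd)+(ab+cd)(a+b)(c+d)\bigr]$. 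Writing $p_1=ab+cd$, $p_2=ac+bd$, $p_3=ad+bc$ for the three perfect matchings of $\{a,b,c,d\}$ and using $(a+b)(c+d)=p_2+p_3$, this last bracket collapses to the symmetric polynomial $p_1p_2+p_1p_3+p_2p_3$. Hence $\|p_{AB}\|_2=S\sqrt{(a+b)(c+d)}$, where $S=\sqrt{abcd\,(p_1p_2+p_1p_3+p_2p_3)}$ is invariant under permuting $a,b,c,d$; the same computation with $b$ and $c$ interchanged gives $\|p_{AC}\|_2=S\sqrt{(a+c)(b+d)}$ with the identical $S$. Substituting,
$$
\frac{p_{AB}}{\|p_{AB}\|_2}+\frac{p_{AC}}{\|p_{AC}\|_2}=\frac{1}{S}\left(\frac{p_{AB}}{\sqrt{(a+b)(c+d)}}+\frac{p_{AC}}{\sqrt{(a+c)(b+d)}}\right),
$$
which spans the same line as the displayed vector, as claimed.

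The step I expect to cost the most effort is this norm computation: the $w(AB)_{U|V}$ are degree-four polynomials, so~(\ref{formulaTwo}) is a sum of six degree-ten terms, and extracting the factor $(a+b)(c+d)$ together with the fully symmetric remainder $abcd\,(p_1p_2+p_1p_3+p_2p_3)$ wants the pairing above rather than a blind expansion. A cheaper route, if one does not need the value of $S$, is to establish only the ratio $\|p_{AB}\|_2^2/\|p_{AC}\|_2^2=(a+b)(c+d)/\bigl((a+c)(b+d)\bigr)$, which is all the bisector formula uses and which may follow directly from the relabeling symmetry exchanging the two resolutions. Everything else -- the appeal to Proposition~\ref{prop:openbook} and the final rearrangement -- is routine and formally identical to the proof of Theorem~\ref{equidistant_bisector}.
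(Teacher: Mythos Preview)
Your proposal is correct and follows essentially the same approach as the paper: both reduce via Proposition~\ref{prop:openbook} to the angle bisector, then compute the norms $\|p_{AB}\|_2$ and $\|p_{AC}\|_2$ from the coefficients in Theorem~\ref{unrooted_formulas} and factor out the common symmetric piece. In fact you go further than the paper does, since the paper simply asserts $\|p_{AB}\|_2 = f\sqrt{(a+b)(c+d)}$ and $\|p_{AC}\|_2 = f\sqrt{(a+c)(b+d)}$ for some common polynomial $f$ without displaying the algebra, whereas you carry out the pairing and identify $f^2 = abcd(p_1p_2+p_1p_3+p_2p_3)$ explicitly.
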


\begin{proof}
We use the fact that $p_{AB}$ and $p_{AC}$ are constant on the vectors $\delta_{U|V}$ for $\{U, V\} \in { \{A, B, C, D\} \choose 2}$ and the formulas for the coeffcients $w(A, B) _{U|V}$ in the proof of Theorem \ref{unrooted_formulas}  to calculate the 2-norms of $p_{AB}$ and $p_{AC}$. Up to an identical polynomial $f$ in the variables $a,b,c$ and $d$, we have
\begin{equation}\label{unrooted:factor:one}
\| p_{AB} \|_{2} = f \cdot \sqrt{(a + b)(c + d)}
\end{equation}
and 
\begin{equation}\label{unrooted:factor:two}
 \| p_{AC} \|_{2} = f \cdot \sqrt{(a + c)(b + d)} 
\end{equation}
As in Theorem \ref{equidistant_bisector} we know that the angle bisector between $p_{AB}$ and $p_{AC}$ gives the boundary of the Voronoi cell in $\Span \ K_{T}$.  By  (\ref{unrooted:factor:one}) and (\ref{unrooted:factor:two})  the angle bisector is a multiple of 
$$
\frac{p_{AB}}{\sqrt{(a + b)(c + d)} } + \frac{p_{AC}}{\sqrt{(a + c)(b + d)}}.
$$

\end{proof}


\subsection{UPGMA Regions Near a Polytomy}

The UPGMA algorithm (Unweighted Pair Group Method with Arithmetic Mean) \cite{Sokal1963} is an agglomerative tree reconstruction method that takes as an input $\alpha$, a dissimilarity map of ${n \choose 2} = n(n-1)/2$ pairwise distances between a set $X$  of $n$ taxa, and returns a rooted equidistant tree metric $d$ on $X$.  The metric $d$ is an approximation to the least squares phylogeny for $\alpha$.  In this section we show that in some circumstances, UPGMA fails to correctly identify the least squares phylogeny.  The occurrence and severity of this failure depends entirely on the relative sizes of the daughter clades $A$, $B$, and $C$ of the tritomy.  The algorithm works as follows:

\begin{algorithm}
\caption{UPGMA}
\label{UPGMA}


\begin{itemize}

\item Input: a dissimilarity map $\alpha \in \rr^{ {n(n-1) / 2}}_{\geq 0}$ on $[n]$.
\item Output: a maximal  chain $C$ in the partition lattice $\Pi_{n}$ and an 
equidistant tree metric $d$.
\item Initialize $\pi_{n} = 1|2| \cdots | n$, and set $\alpha^{n} = \alpha$.
\item For $i = n-1, \ldots, 1$ do
\begin{itemize}
\item  From partition $\pi_{i+1} = \lambda^{i+1}_{1} | \cdots | \lambda^{i+1}_{i+1}$
and distance vector $\alpha^{i+1} \in \rr^{(i+1)i/2}_{\geq 0}$
choose $j, k$ be so that $\alpha^{i+1}(\lambda^{i+1}_{j}, \lambda^{i+1}_{k})$ is minimized.
\item  Set $\pi_{i}$ to be the partition obtained from $\pi_{i+1}$ by
 merging $\lambda^{i+1}_{j}$ and $ \lambda^{i+1}_{k}$ and leaving all other
 parts the same. Let $\lambda^{i}_{i} = \lambda^{i+1}_{j} \cup \lambda^{i+1}_{k}$.
\item  Create new distance $\alpha^{i} \in \rr^{i(i-1)/2}_{\geq 0}$ by
$\alpha^{i}(\lambda, \lambda') = \alpha^{i+1}(\lambda, \lambda')$ if $\lambda, \lambda'$ are
both parts of $\pi_{i+1}$ and
$$
\alpha^{i}(\lambda, \lambda^{i}_{i})
=  \frac{  |\lambda^{i+1}_{j}|}{ |\lambda^{i}_{i}|} 
\alpha^{i+1}( \lambda, \lambda^{i+1}_{j} )  + 
\frac{|\lambda^{i+1}_{k}|}{ |\lambda^{i}_{i}|} 
\alpha^{i+1}( \lambda, \lambda^{i+1}_{k} )
$$
otherwise.
\item  For each $x \in \lambda^{i+1}_{j}$ and $y \in\lambda^{i+1}_{k}$, 
set $d(x,y) = \alpha^{i+1}(\lambda^{i+1}_{j}, \lambda^{i+1}_{k})$.
\end{itemize}
\item Return:  Chain $C = \pi_{n} \lessdot \cdots \lessdot \pi_{1}$ and
equidistant metric $d$.
\end{itemize}

\end{algorithm}

Note that if the blocks $A, B \subset [n]$ are joined in step $i$ of Algorithm \ref{UPGMA} the distance recalculation implies
$$
\alpha^{i}(A, B)  = \frac{1}{|A| |B|} \sum_{x \in A, y \in B} \alpha^{n} (x,y) = \frac{1}{ab} \sum_{x \in A, y \in B} \alpha (x,y),
$$
a formula which is useful in the next Proposition:

\begin{prop}\label{perp_calc}
Let $T$ be a ranked, rooted tree with a single tritomy.  The boundaries between UPGMA regions in $\rr^{n(n-1)/2}_{\geq 0}$ for the resolution cones  $C_{T_{AB}}$, $C_{T_{AC}}$ and $C_{T_{BC}}$ are orthogonal to the plane $\Span \ K_{T}$.  
\end{prop}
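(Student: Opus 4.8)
The plan is to identify the boundary between two UPGMA regions---say those for $C_{T_{AB}}$ and $C_{T_{AC}}$---as an explicit linear hyperplane, to compute its normal vector, and to check that this normal lies in the two-dimensional plane $\Span \ K_{T}$. Orthogonality of the hyperplane to the plane is then an elementary consequence. The same argument will handle the two remaining pairs of resolution cones by symmetry.

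First I would pass to the local picture near a generic tritomy tree metric $d_{T}$. At such a point every choice made by UPGMA is locally stable---the minimum selected at each step of Algorithm \ref{UPGMA} is strict and stays strict under small perturbation of the input---with the single exception of the three-way choice at the step where the clusters $A$, $B$, $C$ have been assembled and one of the pairs $\{A,B\}$, $\{A,C\}$, $\{B,C\}$ is merged; merging the pair with the smallest $\alpha^{i}$-distance produces the corresponding resolution. Hence in a neighborhood of $d_{T}$ the UPGMA region for $C_{T_{AB}}$ is cut out by the stable inequalities together with $\alpha^{i}(A,B)\le\alpha^{i}(A,C)$ and $\alpha^{i}(A,B)\le\alpha^{i}(B,C)$, and the facet it shares with the region for $C_{T_{AC}}$ lies on the hyperplane $\alpha^{i}(A,B)=\alpha^{i}(A,C)$. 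Using the distance recalculation identity recorded just before the proposition, $\alpha^{i}(A,B)=\frac{1}{ab}\langle\delta_{A|B},\alpha\rangle$ and $\alpha^{i}(A,C)=\frac{1}{ac}\langle\delta_{A|C},\alpha\rangle$, so this facet lies on the linear hyperplane $H=\{\alpha:\langle v,\alpha\rangle=0\}$ with
$$
v \;=\; \frac{1}{ab}\,\delta_{A|B} \;-\; \frac{1}{ac}\,\delta_{A|C}.
$$

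Next I would show $v\in\Span \ K_{T}$. By Lemmas \ref{span_basis} and \ref{lem:delminusdel} (as already used in the proof of Lemma \ref{outlier_projection}), the only vector of the orthogonal basis $\{\delta_{\pi_{i}}-\delta_{\pi_{i-1}}\}$ of $\Span \ C_{T}$ that has support on the coordinates $A\times B$, $A\times C$, $B\times C$ is $\delta_{A|B|C}=\delta_{A|B}+\delta_{A|C}+\delta_{B|C}$; all others are supported away from those coordinates. Together with $\|\delta_{A|B}\|_{2}^{2}=ab$ and the mutual orthogonality of $\delta_{A|B},\delta_{A|C},\delta_{B|C}$, this gives
$$
\Span \ K_{T} \;=\; \bigl\{\, \lambda_{AB}\delta_{A|B}+\lambda_{AC}\delta_{A|C}+\lambda_{BC}\delta_{B|C} \;:\; ab\,\lambda_{AB}+ac\,\lambda_{AC}+bc\,\lambda_{BC}=0 \,\bigr\}.
$$
The coefficients $(\lambda_{AB},\lambda_{AC},\lambda_{BC})=(\tfrac{1}{ab},-\tfrac{1}{ac},0)$ of $v$ satisfy $ab\cdot\tfrac{1}{ab}-ac\cdot\tfrac{1}{ac}=0$, so $v\in\Span \ K_{T}$; equivalently, the explicit formulas of Lemma \ref{outlier_projection} give $v=-\tfrac{1}{ab}p_{AB}+\tfrac{1}{ac}p_{AC}$. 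Since $v$ is a nonzero element of the $2$-dimensional plane $\Span \ K_{T}$, the hyperplane $H=v^{\perp}$ contains $(\Span \ K_{T})^{\perp}$ and meets $\Span \ K_{T}$ in a line; that is, $H$ is orthogonal to the plane $\Span \ K_{T}$, as claimed. The pairs $\{AB,BC\}$ and $\{AC,BC\}$ are handled verbatim.

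The step I expect to require the most care is the local reduction in the second paragraph: one must confirm that in a neighborhood of a generic tritomy metric UPGMA genuinely assembles $A$, $B$, $C$ as clusters and performs the ambiguous merge at the correct moment, so that the region boundaries are honestly the linear hypersurfaces $\alpha^{i}(A,B)=\alpha^{i}(A,C)$ and their permutations rather than some more complicated strata. This is where the polyhedral description of the UPGMA subdivision from \cite{Davidson2013} is invoked; once that is granted, everything after it is the routine linear algebra above.
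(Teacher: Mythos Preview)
Your proposal is correct and follows essentially the same route as the paper: identify the UPGMA boundary between two resolution regions as the hyperplane $\alpha^{i}(A,B)=\alpha^{i}(A,C)$, read off its normal vector $\frac{1}{ab}\delta_{A|B}-\frac{1}{ac}\delta_{A|C}$, and show that this normal lies in $\Span\ K_{T}$. The only difference is cosmetic: the paper verifies membership in $\Span\ K_{T}$ by directly expanding $-\frac{1}{ac}p_{AC}+\frac{1}{bc}p_{BC}$ via Lemma~\ref{outlier_projection} and matching it to the normal, whereas you first derive the intrinsic description $\Span\ K_{T}=\{\lambda_{AB}\delta_{A|B}+\lambda_{AC}\delta_{A|C}+\lambda_{BC}\delta_{B|C}:ab\,\lambda_{AB}+ac\,\lambda_{AC}+bc\,\lambda_{BC}=0\}$ and check the single linear relation---a slightly cleaner packaging of the same computation.
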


\begin{proof}
The boundary between the UPGMA regions for the cones $C_{T_{AC}}$ and $C_{T_{BC}}$ is given by the condition 
$$
\alpha^{k}(A,C) = \alpha^{k} (B,C)
$$
which translates into the following linear condition on the original
dissimilarity map
$$
\frac{1}{ac}  \sum_{i \in A, j \in C} \alpha(i,j) =
\frac{1}{bc}  \sum_{i \in B, j \in C} \alpha(i,j). 
$$
This hyperplane  has normal vector 
$$
\frac{1}{ac}  \delta_{A|C} -\frac{1}{bc}  \delta_{B|C}.
$$
Now 
$$
-\frac{1}{ac} p_{AC} = \frac{1}{ab + ac + bc} \left ( -\delta_{A|B} - \delta_{B|C} + \frac{ab + bc}{ac} \delta_{A|C} \right)
$$
and 
$$
\frac{1}{bc} p_{BC} = \frac{1}{ab + ac + bc} \left ( \delta_{A|B} + \delta_{A|C} - \frac{ab + ac}{bc} \delta_{B|C} \right)
$$
So 
$$
-\frac{1}{ac} p_{AC} + \frac{1}{bc} p_{BC} =  \frac{1}{ab + ac + bc} \left( \left ( \frac{ab + bc}{ac} + 1 \right) \delta_{A|C} - \left(\frac{ab + ac}{bc} + 1 \right) \delta_{B|C} \right)
$$
$$
= \frac{1}{ac}  \delta_{A|C} -\frac{1}{bc}  \delta_{B|C}.
$$
Thus, the normal vector for the boundary between UPGMA regions for the cones $C_{T_{AC}}$ and $C_{T_{BC}}$ is in $\Span \ K_{T}$, and the boundary is orthogonal to $\Span \ K_{T}$.  The calculation is the same for the other two pairs of cones.  
\end{proof}


\begin{thm}\label{UPGMA_boundary}
The boundary between the UPGMA cells for the resolution tree topologies $T_{AC}$ and $T_{BC}$ in $\Span \ K_{T}$ is $-p_{AB}$.  
\end{thm}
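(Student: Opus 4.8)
The plan is to reduce the statement to a short computation inside the two‑dimensional plane $\Span \ K_{T}$, using the machinery already set up in Section \ref{sec:tritomygeometry} and the previous proposition. By Proposition \ref{perp_calc}, the UPGMA boundary between $T_{AC}$ and $T_{BC}$ is a hyperplane whose normal vector lies in $\Span \ K_{T}$; in fact the proof of that proposition identifies the normal vector explicitly as $-\tfrac{1}{ac}p_{AC} + \tfrac{1}{bc}p_{BC}$. So the boundary, intersected with the plane $\Span \ K_{T}$, is exactly the line through the origin in $\Span \ K_{T}$ orthogonal to this normal vector. The claim ``the boundary in $\Span \ K_{T}$ is $-p_{AB}$'' therefore amounts to the single orthogonality relation
$$
\left(-\tfrac{1}{ac}p_{AC} + \tfrac{1}{bc}p_{BC}\right) \cdot p_{AB} = 0,
$$
together with the observation that $-p_{AB}$ is a nonzero vector in $\Span \ K_{T}$ (which is immediate from Corollary \ref{rooted:2dim} and the explicit formula in Lemma \ref{outlier_projection}).

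First I would recall from Corollary \ref{rooted:2dim} that $p_{AB} + p_{AC} + p_{BC} = \mathbf{0}$, so $p_{AB} = -(p_{AC} + p_{BC})$, and hence it suffices to check that $-\tfrac{1}{ac}p_{AC} + \tfrac{1}{bc}p_{BC}$ is orthogonal to $p_{AC} + p_{BC}$. Next I would compute the three relevant inner products $p_{AC}\cdot p_{AC}$, $p_{BC}\cdot p_{BC}$, and $p_{AC}\cdot p_{BC}$. The last of these is already recorded in the proof of Theorem \ref{equidistant:angle} as $p_{AC}\cdot p_{BC} = \tfrac{-a^{2}bc}{ab+ac+bc}$, and the squared norms $\|p_{AC}\|_2^2 = \tfrac{abc(a+c)}{ab+ac+bc}$ and $\|p_{BC}\|_2^2 = \tfrac{abc(b+c)}{ab+ac+bc}$ are likewise recorded there; so no new inner‑product calculation using the orthogonality of $\delta_{A|B},\delta_{A|C},\delta_{B|C}$ is actually needed. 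Substituting these three values into $\left(-\tfrac{1}{ac}p_{AC} + \tfrac{1}{bc}p_{BC}\right)\cdot (p_{AC}+p_{BC})$ and clearing the common denominator $ab+ac+bc$, the expression collapses to a rational function in $a,b,c$ that one checks is identically zero; this is the routine step. Finally I would note that since the normal direction spans a one‑dimensional subspace of the two‑dimensional plane $\Span \ K_{T}$, its orthogonal complement within that plane is also one‑dimensional, and $-p_{AB}$ lies in it and is nonzero, so $-p_{AB}$ spans the boundary line, as claimed.

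The only mild subtlety — and the ``hard part'', such as it is — is keeping straight which plane everything lives in: the UPGMA boundary is a genuine hyperplane in $\rr^{n(n-1)/2}$, but Proposition \ref{perp_calc} tells us it is orthogonal to $\Span \ K_{T}$, so its trace on $\Span \ K_{T}$ is a line, and that line is what Theorem \ref{UPGMA_boundary} describes. I would state this reduction cleanly at the outset so that the remaining content is just the one orthogonality identity above, verified by plugging in the already‑computed quantities from the proof of Theorem \ref{equidistant:angle}. One should also remark that the analogous statements for the pairs $\{T_{AB},T_{BC}\}$ and $\{T_{AB},T_{AC}\}$ follow by the same computation with the roles of $a,b,c$ permuted, giving boundaries $-p_{AC}$ and $-p_{BC}$ respectively — a symmetry worth noting since it makes the three UPGMA boundary rays in $K_{T}$ exactly the negatives of the three resolution rays, which is the geometric picture the subsequent discussion will use.
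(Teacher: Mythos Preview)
Your approach is correct and genuinely different from the paper's. The paper proceeds by direct verification: it plugs $\omega = -p_{AB}$ into the UPGMA boundary condition $\tfrac{1}{ac}\sum_{A,C}\omega_{ij} = \tfrac{1}{bc}\sum_{B,C}\omega_{ij} \le \tfrac{1}{ab}\sum_{A,B}\omega_{ij}$ and checks both the equality and the inequality using the explicit formula for $p_{AB}$ from Lemma~\ref{outlier_projection}. You instead argue indirectly via orthogonality: since the normal $-\tfrac{1}{ac}p_{AC}+\tfrac{1}{bc}p_{BC}$ lies in the two-dimensional plane $\Span K_T$, the boundary line in that plane is its orthogonal complement there, and you verify $p_{AB}$ is orthogonal to it using the inner products already computed in Theorem~\ref{equidistant:angle}. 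Your route is arguably cleaner because it recycles existing computations and makes the picture (``UPGMA boundaries are the negatives of the resolution rays'') transparent; the paper's route is more self-contained and simultaneously handles the sign.

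Two cautions. First, the value you quote from the proof of Theorem~\ref{equidistant:angle}, namely $p_{AC}\cdot p_{BC} = -a^2bc/(ab+ac+bc)$, is a typo in the paper: the correct value is $-abc^{2}/(ab+ac+bc)$ (as one sees by symmetry in $a,b$, or by checking that only this value yields the stated cosine $-c/\sqrt{(a+c)(b+c)}$). With the typo your orthogonality identity fails; with the corrected value it goes through. Second, your argument pins down only the \emph{line} through $p_{AB}$, not the ray $-p_{AB}$: the theorem is asserting which of the two rays is the common boundary of the $T_{AC}$ and $T_{BC}$ cells, and this requires the inequality check the paper performs (e.g.\ that the $\delta_{A|B}$-average of $-p_{AB}$ exceeds its $\delta_{A|C}$- and $\delta_{B|C}$-averages). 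Add a one-line verification of that inequality to complete the argument.
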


\begin{proof}

Since $\Span \ K_{T}$ is two-dimensional and the boundaries between the UPGMA regions for the resolutions $T_{AB}$, $T_{AC}$ and $T_{BC}$ are orthogonal to $K_{T}$ by Proposition \ref{perp_calc}, it suffices to find a vector $\omega \in \Span  \ K_{T}$ that satisfies 
$$
\frac{1}{a c } \sum_{i \in A, j \in C} \omega_{i,j} = \frac{1}{bc } \sum_{k \in B, \ell \in C} \omega_{k,\ell} \leq \frac{1}{ab } \sum_{m \in A, n \in B} \omega_{m,n}.
$$
Any such vector will span the boundary of the UPGMA cells.   We have
$$
\frac{1}{a c } \sum_{i \in A, j \in C} -(p_{AB})_{i,j} =  \frac{1}{bc } \sum_{k \in B, \ell \in C} -(p_{AB})_{k,\ell} = -\frac{ab}{ab + ac + bc}
$$
while
$$
\frac{1}{ab } \sum_{m \in A, n \in B} -(p_{AB})_{m,n} = \frac{ac + bc}{ab + ac + bc}
$$
So $-p_{AB}$ satisfies the required condition.  
\end{proof}

 \subsection{UPGMA and LSP Cells}
 
  In this section  we discuss how results from Sections \ref{sec:tritomygeometry} and \ref{sec:tritomymethods} show that UPGMA poorly matches LSP in some circumstances. The geometry of the fan $K_{T}$ and the UPGMA cells in $\Span \ K_{T}$ for equidistant trees depends entirely on the size of the daughter clades $A, B$, and $C$ of the tritomy.  Consequentially, the quality of the performance of UPGMA near a tree metric with a tritomy depends on how similar in size the daughter clades are.  When $a = b = c$, the UPGMA and LSP regions near a tritomy are the same, but as either one or two of the daughter clades becomes much larger, UPGMA does a poorer job of identifying the LSP.     We also use results from Section \ref{sec:tritomymethods}  to show that NJ poorly matches LSP in specific examples for small numbers of taxa.

We can use our theorems about the geometry of $\Span \ K_{T}$ to  investigate the relative size of the UPGMA and Voronoi cells as $a, b$, and $c$ vary.  By Theorem \ref{equidistant:angle}  the angle between $C_{T_{AC}}$ and $C_{T_{BC}}$ is 
$$
 \arccos \left(   \frac{-c}{\sqrt{(a + c)(b + c)} }\right) 
$$
and this is also the angle measure of the UPGMA region associated with the cone
$C_{T_{AB}}$.  By the angle bisector argument, we see that the angle
measure of the LSP region associated to the tree $T_{AB}$ near the
tritomy will be:
$$
 \frac{1}{2} \arccos \left(   \frac{-a}{\sqrt{(a + b)(a + c)} }\right) + 
 \frac{1}{2} \arccos \left(   \frac{-b}{\sqrt{(a + b)(b + c)} }\right). 
$$
When $c > > a \approx b$, the angle for the UPGMA region approaches $\pi$ whereas
the angle for the LSP region approaches $\pi/{2}$.
Conversely, when $a \approx b  >> c$ the angle for the UPGMA region approaches
$\pi/2$ whereas the angle for the LSP region approaches $3 \pi/4$.
Tables \ref{table:CBiggerAB} and \ref{table:ABBiggerC} compare the sizes of the various regions for
differing values of $a,b,$ and $c$.  We display the sizes as the percentage of the total amount of the local volume around the polytomy that corresponds to the UPGMA or LSP region for the cone $C_{T_{AB}}$.

\begin{table}
\renewcommand{\arraystretch}{1.3}
\caption{Region sizes for $C_{T_{AB}}$ when $c > > a = b$}
\label{table:CBiggerAB}
\centering
\begin{tabular}{| c | c | c | c | c |}
\hline
a & b & c & UPGMA  & LSP  \\ \hline
1 & 1 & 1 & 33.3333 & 33.3333  \\
1 & 1 & 2 & 36.6139 & 31.693  \\
1 & 1 & 4 & 39.7583 & 30.1209 \\
1 & 1 & 8 & 42.4261 & 28.787 \\
1 & 1 & 16 & 44.5139 &  27.7431 \\
1 & 1 & $2^{10}$ & 49.297 & 25.3515  \\
1 & 1 & $2^{20}$ & 49.978 & 25.011  \\ \hline
\end{tabular}
\end{table}

\begin{table}
\renewcommand{\arraystretch}{1.3}
\caption{Region sizes for $C_{T_{AB}}$ when $a = b  > > c $}
\label{table:ABBiggerC}
\centering
\begin{tabular}{| c | c | c | c | c |}
\hline
a & b & c & UPGMA  & LSP    \\ \hline
1 & 1 & 1 &  33.3333 & 33.3333  \\
2 & 2 & 1 & 30.4086 & 34.7957 \\
4 & 4 & 1 & 28.2046 & 35.8977 \\
8 & 8 & 1 & 26.7721 & 36.6139 \\
16 & 16 & 1 & 25.9367 & 37.0317 \\
$2^{10}$ & $2^{10}$ & 1 & 25.0155 & 37.4923  \\
$2^{20}$ & $2^{20}$ & 1 & 25.0001 & 37.4999  \\ \hline
\end{tabular}
\end{table}

While the convergence to the limiting values is slow, already for small values of $a,b$, and
$c$ there is significant discrepancy between UPGMA and LSP.
Figures \ref{CBIGAB} and \ref{ABBIGC} illustrate  the geometry of this phenomenon for the two extreme cases $c > > a \approx b$, and $a \approx b > > c$.  In both figures the fan $K_{T}$ is black, the vector $p_{AB}$ is labeled with the pair $AB$, LSP boundaries are blue, and UPGMA boundaries are red.  Note that when $c > > a \approx b$, UPGMA overestimates the size of the LSP region for $C_{T_{AB}}$.  When $a \approx b > > c$, UPGMA underestimates the size of the LSP region for $C_{T_{AB}}$.   
 
\begin{figure}
\centering
\includegraphics[width=8cm]{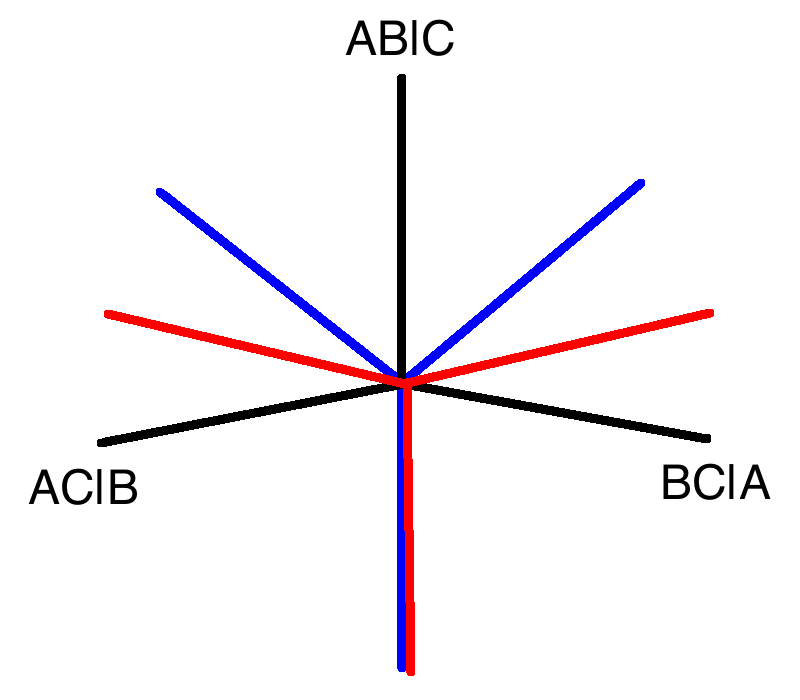} 
 \caption{The case $c > > a \approx b$. The fan $K_{T}$ is black, LSP boundaries are blue,  and UPGMA boundaries are red.}
 \label{CBIGAB}
\end{figure}

\begin{figure}
\centering
\includegraphics[width=8cm]{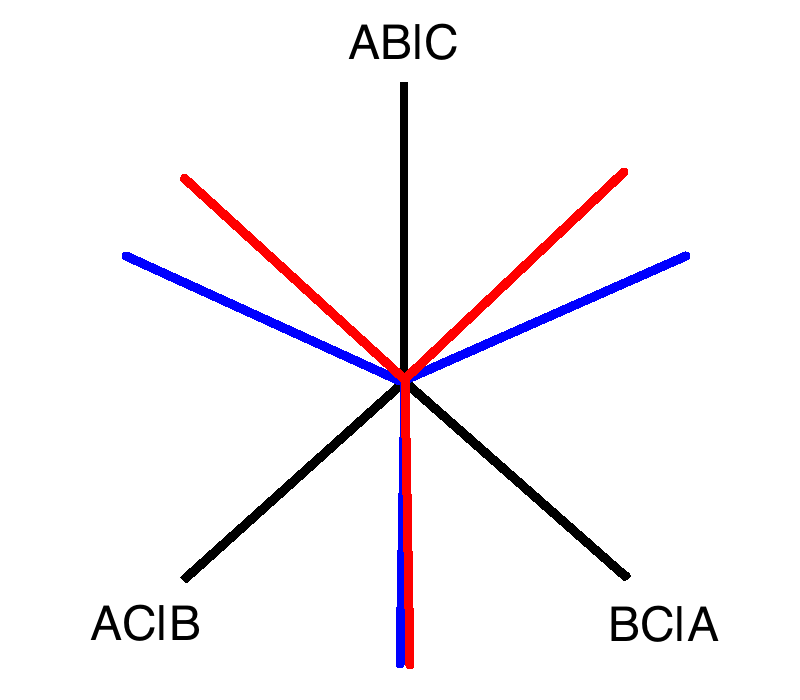} 
 \caption{The case $a \approx b  > > c$. The fan $K_{T}$ is black, LSP boundaries are blue,  and UPGMA boundaries are red. }
 \label{ABBIGC}
\end{figure}


\subsection*{LSP Cells and Local NJ Behavior}

The Neighbor-Joining (NJ) algorithm, due to Saitou and Nei \cite{SaitouNei} is a distance-based reconstruction method that returns an unrooted tree $T$ and a tree metric realized by $T$.  Both the selection criterion (known as the "$Q$-criterion") and distance recalculation are linear combinations of the original input coordinates.  Therefore, as in the case of UPGMA, NJ divides the input space $\rr^{n(n-1)/2}_{\geq 0}$ into a family of polyhedral cones studied in \cite{NJSphere} and \cite{Haws2011}.  

A complete combinatorial description of the NJ cones remains unknown, and we do not have a closed description of the local geometry of the NJ regions around a tritomy.  However, by running NJ on points sampled uniformly from the surface of  a small sphere around a tritomy, we can obtain an empirical estimate of the local relative size of NJ regions for small numbers of taxa.  

For unrooted tree metrics, the case of interest is when $a$ and $b$ are larger than $c$ and $d$: if $a = b = c$ and $d$ is larger or smaller, the size of the LSP cells the for three resolution cones will be symmetric.  NJ poorly identifies LSP when $a$ and $b$ are larger than $c$ and $d$ even for small numbers of taxa.  Unlike in the case of UPGMA, the relative size of the regions appears to be dictated not only by $a, b, c$, and $d$, but also by the topology of the subtrees with leaf sets $A, B, C$, and $D$.   

\begin{algorithm}
\caption{Neighbor-Joining}
\label{NJ}

\begin{itemize}

\item Input: a dissimilarity map $\alpha \in \rr^{ {n(n-1)/2}}_{\geq 0}$ on $[n]$.
\item Output: an unrooted binary tree $T$ and a tree metric $d_{T, w} = d$ realized by $T$.  
\item Initialize $[ n] = \{1,2,... , n\}$, and set $d_{0} = \alpha$.
\item For $r = 1 , \ldots, n-3$ do
\begin{itemize}
\item Identify subsets $A_{i}, A_{j}$ of $[n]$   minimizing $$Q_{r}(A_{i}, A_{j}) = (n-r-1)d_{r-1}(A_{i}, A_{j}) - \sum_{k= 1}^{n-r +1} d_{r}(A_{i}, A_{k}) - \sum_{k = 1}^{n-r + 1} d_{r}(A_{j}, A_{k})$$, \item Update $$d_{r}(A_{ij}, A_{k}) = \frac{1}{2} \left( d_{r-1}(A_{i}, A_{k}) +   d_{r-1}(A_{j}, A_{k}) - d_{r}(A_{i}, A_{j}) \right) $$

\end{itemize}

\item Return:  unrooted binary combinatorial tree $T$, $w : E(T)  \rightarrow \rr$ and
tree metric $d_{n-3} = d_{T, w}$.
\end{itemize}

\end{algorithm}

Applying Theorem \ref{unrooted_formulas}, we see that when $a \approx b >  > c \approx d$, the angle between the cones $C_{T_{AC}}$ and $C_{T_{AD}}$ approaches $\pi$, while the LSP angle for $C_{T_{AB}}$, bounded by angle bisectors between the two pairs $\{p_{AB}, p_{AC}\}$ and $\{p_{AB}, p_{AD}\}$, approaches $\pi/2$.  Figure \ref{ABBiggerCD} shows this case.  The fan $K_{T}$ is black, and the LSP boundaries are blue.

\begin{figure}
\centering
\includegraphics[width=8cm]{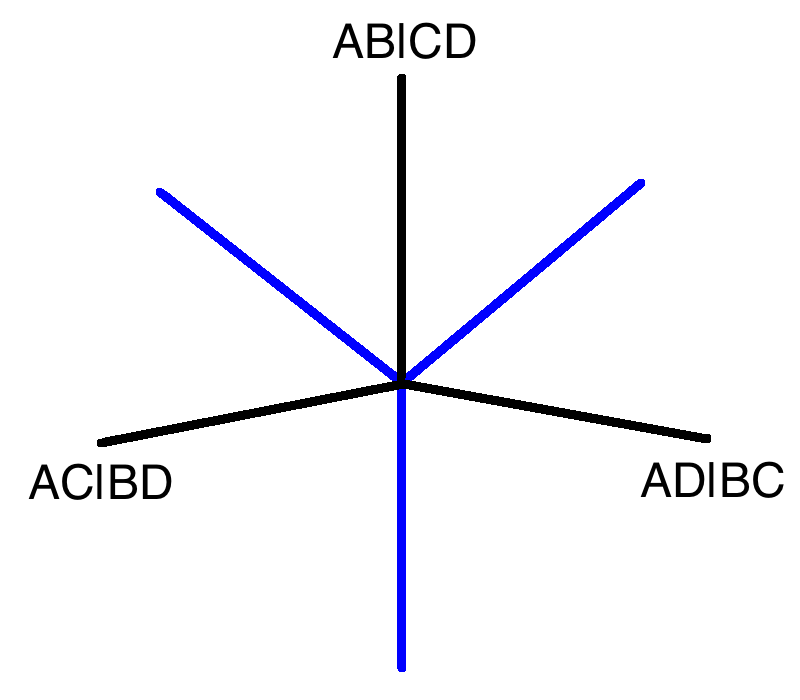}
\caption{The case $a \approx b > > c \approx d$. The fan $K_{T}$ is black, and LSP boundaries are blue. }
\label{ABBiggerCD}
\end{figure}

For small values of $a,b,c$ and $d$ we present computational evidence that NJ fails to identify LSP correctly.  Consider the tree metrics $d_{1}$  and $d_{2}$ with topologies shown in Figures \ref{ETomy} and \ref{FTomy} and given edge weights of randomly assigned numbers between 5000 and 10000.  Here $A = \{1,2,3,4,5,6 \}, B = \{7,8,9,10,11,12 \}, C = \{13\}$, and  $D = \{14\}$.  Using Theorem \ref{unrooted_formulas} we can calculate the angles between the pairs of cones in $\{C_{T_{AB}}, C_{T_{AC}}, C_{T_{AD}} \}$ and find the relative sizes of the LSP regions near the tritomies $d_{1}$ and $d_{2}$.  Recall that one consequence of Theorem \ref{unrooted_formulas} is that the relative size of the LSP regions near $d_{1}$ and $d_{2}$ will be the same because these proportions only depend on $a, b, c$, and $d$.

\begin{figure}
\centering
\includegraphics[width=8cm]{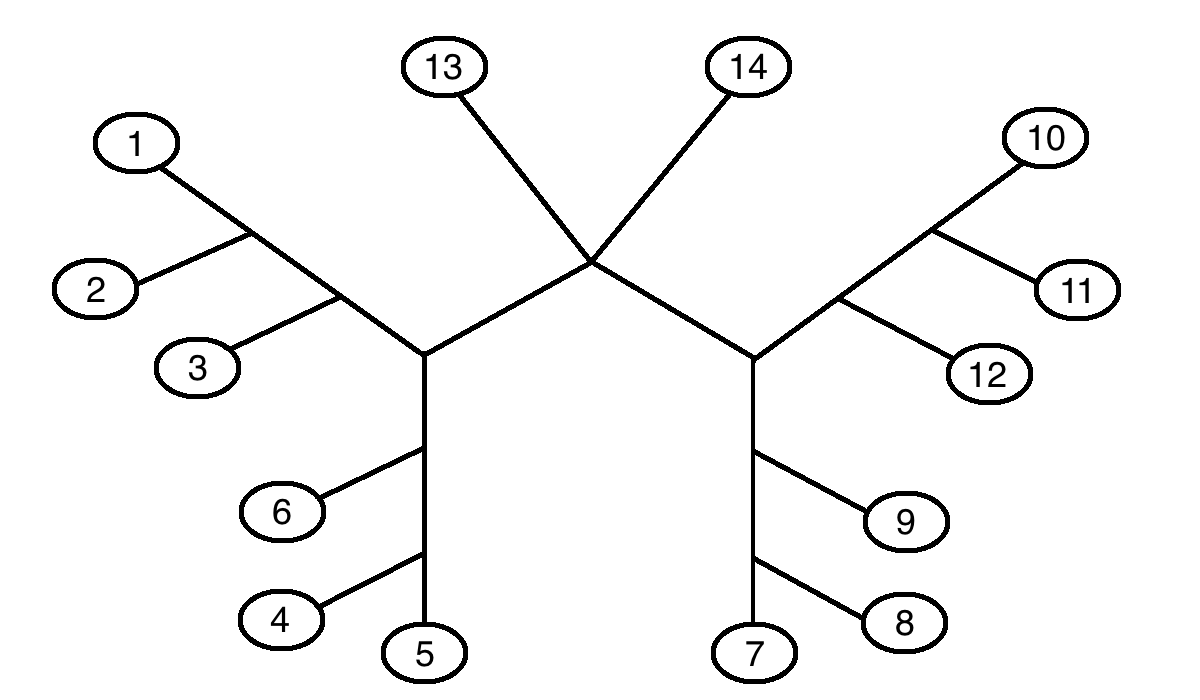}
\caption{A tritomy $d_{1}$ on the leaf set $[14]$}
\label{ETomy}
\end{figure}

\begin{figure}
\centering
\includegraphics[width=8cm]{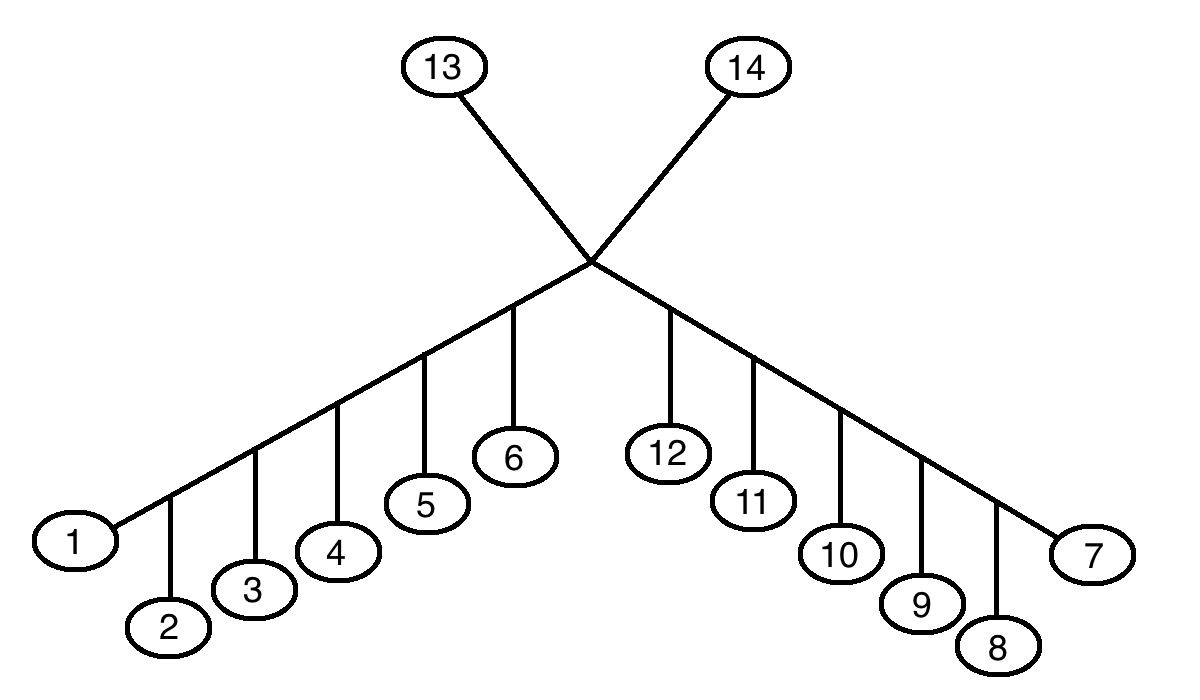}
\caption{A tritomy $d_{2}$ on the leaf set $[14]$}
\label{FTomy}
\end{figure}

Running NJ on 1,000,000 points sampled uniformly from spheres of radius 0.05 centered at $d_{1}$ and $d_{2}$ gives an empirical measure of the size of NJ regions for the three resolutions $T_{AB}$, $T_{AC}$ and $T_{AD}$ near the two points. We compare this empirical distribution with the size of the LSP regions computed via Theorem 3.7 in Table \ref{NJTable}.   Sizes of the regions are given as percentages of the total local volume near the tritomy.  

\begin{table}
\renewcommand{\arraystretch}{1.3}
\caption{ NJ and LSP near $d_{1}$ and $d_{2}$}
\label{NJTable}
\centering
\begin{tabular}{ | c | c | c | c |  }
\hline
 Resolution of Splits &  LSP & NJ : $d_{1}$ & NJ: $d_{2}$   \\ \hline
 \quad $ AB | CD $ \quad  & \quad 30.6897 \quad & \quad  38.1501 \quad  &  \quad 35.7037 \\
\quad $AC | BD$ \quad & \quad  34.6552 \quad & \quad  30.9344 \quad & \quad 32.1305 \\
 \quad $AD | BC $ \quad & \quad 34.6552 \quad & \quad  30.9155 \quad & \quad 32.1658 \\
 \hline
\end{tabular}
\end{table}

\addvspace{1pc}

Table \ref{NJTable} shows that NJ overestimates the size of the LSP regions near $d_{1}$ and $d_{2}$ closest to the cone $C_{T_{AB}}$ and underestimates the regions near $C_{T_{AC}}$ and $C_{T_{AD}}$.  Furthermore, the topological structure of the subclades $A$ and $B$ influence the local size of the NJ regions.  This shows that a direct analog of Theorem \ref{UPGMA_boundary} will not exist for NJ.   However, there may exist an analogous theorem for NJ when the topology of the subtrees around the polytomy is taken into account.

\section{Conclusion} \label{sec:conclusion}

Distance-based heuristics like UPGMA and NJ can be seen as approximating solutions to the intuitively appealing but NP-hard least-squares phylogeny problem.  We compared heuristics to LSP when the true tree metric contains a tritomy.  For UPGMA, our theoretical analysis shows that the success rate of the heuristic greatly depends on how balanced the sizes of the underlying daughter clades are.

Due to the precise form of input data used by NJ and the outputs of the algorithm, NJ is an approximation to LSP.  However, Gascuel and Steel showed that  NJ performs a heuristic search, guided by the $Q$-criterion at each agglomeration step, that minimizes a tree-length estimate due to Pauplin known as the "Balanced Minimum Evolution" (BME) criterion (\cite{Pauplin}, \cite{GascuelSteel}).  This insight was incorporated into the selection criterion and distance recalculation aspects of the  algorithms  BIONJ \cite{BIONJ}, Weighbor  \cite{Weighbor}, and FastME \cite{DesperGascuel}.  These algorithms take distance matrices as input and have superior performance to NJ in terms of topological accuracy and better immunity to pathologies such as the long-branch attraction.  However, the subdivision of the input spaces induced by each of these improved algorithms is not polyhedral and, like the Voronoi cells around higher-degree polytomies, have a complicated semi-algebraic description.  

Any improvements to distance-based methods implied by the results in this paper would require a fundamentally different approach, such as changing the $Q$-criterion at each step to reflect the size of the taxon groups to be joined.



\section*{Acknowledgments}

Ruth Davidson was partially supported by the US National Science Foundation (DMS 0954865).
Seth Sullivant was partially supported by the David and Lucille Packard 
Foundation and the US National Science Foundation (DMS 0954865).


\end{document}